\newcommand{\nats}                  {{\mathbb N}}
\newcommand{\cD}{\mathcal{D}}
\newcommand{\cL}{\mathcal{L}}
\newcommand{\cB}{\mathcal{B}}
\newcommand{\cM}{\mathcal{M}}
\newcommand{\cC}{\mathcal{C}}
\newcommand{\der}    						  {\vdash}
\newcommand{\tuple}[1]                         {{\langle #1\rangle}}
\newcommand{\cT}{\mathcal{T}}
\newcommand{\cV}{\mathcal{V}}
\newcommand{\conn}{{\copyright}}
\newcommand{\conw}{\mathsf{conn}}
\newcommand{\infrule}                              {{\circledR}}
\newcommand{\Mmon}                              {\mathsf{mon}}
\newcommand{\Subst}                              {\mathsf{Sbst}}
\newcommand{\St}                              {\mathsf{S}}
\newcommand{\Mx}                              {\mathsf{M}}
\newcommand{\LogMat}                              {\mathsf{LM}}
\newcommand{\val}{v}
\newcommand{\EqDiv}[3]{#2 \dashv\vdash_{#1} #3}
\newcommand{\TWO}{\mathbbm{2}}
\newcommand\mytilde[1]{\widetilde{#1}}
\def\dyhat{-.5ex}
\newcommand\mynewtilde[1]{\ThisStyle{%
              \stackon[\dyhat]{\SavedStyle#1}
                              {\SavedStyle\widetilde{\phantom{#1}}}}}
\newcommand{\fibS}{\bullet}
\newcommand{\fib}[2]{#1\fibS#2}
\newcommand{\cH}{\mathcal{H}}
\newcommand{\MatV}{\mathsf{V}}
\newcommand{\MatD}{\mathsf{D}}
\newcommand{\MatC}{\mathsf{C}}
\DeclareMathOperator*{\sub}{\mathsf{sbf}}
\DeclareMathOperator*{\var}{\mathsf{var}}
\DeclareMathOperator*{\head}{\mathsf{head}}
\DeclareMathOperator*{\skel}{\mathsf{sk}}
\DeclareMathOperator*{\Rule}{\mathsf{R}}
\DeclareMathOperator*{\imp}{\to} 
\DeclareMathOperator*{\ifelse}{\textsc{if}}
\DeclareMathOperator*{\eq}{\leftrightarrow}
\newdimen\arrowsize 
\newdimen\arrowsize 
\newbox\xrat@below
\newbox\xrat@above
\newcommand{\coimparrow}[2][]{%
  \setbox\xrat@below=\hbox{\ensuremath{\scriptstyle #1}}%
  \setbox\xrat@above=\hbox{\ensuremath{\scriptstyle #2}}%
  \pgfmathsetlengthmacro{\xrat@len}{max(\wd\xrat@below,\wd\xrat@above)+.65em}%
  \mathrel{\tikz [>-,>=arcs',baseline=-.5ex,line width=.2mm]
                 \draw (0,0) -- node[below=-2pt] {\box\xrat@below}
                                node[above=-2pt] {\box\xrat@above}
                       (\xrat@len,0) ;}}
\newcommand{\coimpl}{\not\to} 
\newcommand{\coimplsub}{\not\to}
\newcommand{\pow}[1]{\mathsf{Pow}(#1)}
\newcommand{\ou}          {\vee}
\newcommand{\e}          {\wedge}
\DeclareMathOperator*{\bbot}{\mathsf{\bot}}
\DeclareMathOperator*{\size}{\mathsf{Size}}
\date{\today}
\title{Merging fragments of classical logic\thanks{This research was done under the scope of R\&D Unit 50008, financed by the applicable financial framework (FCT/MEC
through national funds and when applicable co-funded by FEDER/PT2020), and is part of the MoSH initiative of SQIG at Instituto de Telecomunica\c{c}\~{o}es.
S\'ergio Marcelino acknowledges the FCT
postdoc grant SFRH/BPD/76513/2011.  
Jo\~ao Marcos acknowledges partial support by CNPq and by the Humboldt Foundation.}}
\author{%
Carlos Caleiro\inst{1}
\and
S\'ergio Marcelino\inst{1}
\and 
Jo\~ao Marcos\inst{2}
}
\institute{Departament of Mathematics, IST, Universidade de Lisboa, Portugal\\
          {\tt \{ccal,smarcel\}@math.tecnico.ulisboa.pt}
\and           
           Lo.L.I.T.A.\ and DIMAp, UFRN, Brazil \\
           \email{jmarcos@dimap.ufrn.br}
}
\begin{document}

\maketitle

\begin{abstract}
We investigate the possibility of extending the non-func\-tion\-ally complete logic of a collection of Boolean connectives by the addition of further Boolean connectives that make the resulting set of connectives functionally complete.  
More precisely, we will be interested in checking whether an axiomatization for Classical Propositional Logic may be produced by merging Hilbert-style calculi for two disjoint 
incomplete fragments of it.  We will prove that the answer to that problem is a negative one, unless one of the components includes only top-like connectives.
\end{abstract}

\allowdisplaybreaks

\section{Introduction}
\noindent
Hilbert-style calculi are arguably the most widespread way of defining logics, and simultaneously the least studied one, from the metalogical viewpoint. This is mostly due to the fact that proofs in Hilbert-style calculi are hard to obtain and systematize, in contrast with other proof formalisms such as sequent calculi and their well developed proof-theory, and semantic approaches involving algebraic or relational structures.
Still, Hilbert-style calculi are most directly associated with the fundamental notion of logic as a consequence operation and are thus worth studying.  Furthermore, merging together Hilbert-style calculi for given logics in order to build a combined logic precisely captures the mechanism for combining logics known as \textit{fibring}, yielding the least logic on the joint language that extends the logics given as input~\cite{ccal:car:jfr:css:04}. Fibring fares well with respect to two basic guiding principles one may consider, conservativity and interaction. In contrast, despite their {better behaved compositional character},
alternative approaches based for instance on sequent calculi are prone to emerging interactions and breaches in conservativity (see, for instance, the~\emph{collapsing problem}~\cite{ccal:jabr:05}).

In this paper, as an application of recent results about fibred logics, we investigate the modular construction of Hilbert-style calculi for classical logic. Take, for instance, implication and negation. Together, they form a functionally complete set of connectives. However, all suitable axiomatizations of classical logic we have seen include at least one axiom/rule where implication and negation interact. Rautenberg's general method for axiomatizing fragments of classical logic~\cite{Rautenberg1981}, which explores the structure of Post's lattice~\cite{Post41,Lau:2006:FAF:1205006}, further confirms the intuition about the essential role of interaction axioms/rules, that one may have drawn from any experience with axiomatizations of classical logic. Additionally, such expectation is consistent with a careful analysis of the characterization of the complexity of different fragments of classical logic and their associated satisfiability problems~\cite{REITH20031,Sistla:1985:CPL:3828.3837}, namely in the light of recent results on the decidability and complexity of fibred logics~\cite{deccomp}. The question we wish to give a definitive answer to, here, is precisely this: is it possible to recover classical logic by 
fibring two disjoint fragments of it?
We will show that 
the recovery is successful
iff one of the logics represents a fragment of classical logic consisting only of top-like connectives (i.e., connectives that only produce theorems, for whichever arguments received as input), while the other results in a functionally complete set of connectives with the addition of~$\top$. 

The paper is organized as follows. In \Cref{sec:prelim}, we overview basic notions of logic, including Hilbert calculi and logical matrices, and introduce helpful notation. In \Cref{subsec:fibring} we carefully review the mechanism for fibring logics, as well as some general results about disjoint fibring that shall be necessary next. Our main results, analyzing the merging of disjoint 
fragments of classical logic, are obtained in \Cref{sec:merge}. We conclude, in \Cref{sec:final}, with a brief discussion of further work. To the best of our knowledge, \Cref{prop:conslifts} (\Cref{subsec:fibring}) and all the characterization results in \Cref{sec:merge} are new.

\section{Preliminaries}
\label{sec:prelim}


\subsection{Logics in abstract}
\label{subsec:syntax}

\noindent
In what follows, a \textsl{signature}~$\Sigma$ is an indexed set $\{\Sigma^{(k)}\}_{k\in\mathbb{N}}$, where each~$\Sigma^{(k)}$ is a collection of $k$-place \textsl{connectives}.  
Given a signature~$\Sigma$ and a (disjoint) set~$P$ of \textsl{sentential variables}, we denote by $L_\Sigma(P)$ the absolutely free $\Sigma$-algebra generated by~$P$, also known as \textsl{the language generated by~$P$ over~$\Sigma$}.  The objects in~$L_\Sigma(P)$ are called \textsl{formulas}, and a formula is called \textsl{compound} in case it belongs to $L_\Sigma(P)\setminus P$, that is, in case it contains some connective.  
We will sometimes use $\head(C)$ to refer to the main connective in a compound formula~$C$, and say that a formula~$C$ is \textsl{$\Sigma$-headed} if $\head(C)\in\Sigma$. 
Furthermore, we will use $\sub(C)$ to refer to the set of subformulas of~$C$, and use $\var(C)$ to refer to the set of sentential variables occurring in~$C$; the definitions of $\sub$ and $\var$ are extended to sets of formulas in the obvious way.
Given a formula~$C$ such that $\var(C)\subseteq\{p_1,\ldots,p_k\}$, it is sometimes convenient to take it as inducing a \textsl{$k$-ary term function} $\varphi=\lambda p_1\ldots p_k.C$ such that $\varphi(p_1,\ldots,p_k)=C$, 
over which we will employ essentially the same terminology used to talk about connectives and formulas therewith constructed ---in particular, a $k$-ary term function is 
induced by a formula generated by~$k$ distinct sentential variables over a $k$-place connective.
In such cases we will also say that the corresponding term functions are \textsl{allowed by} the underlying language and \textsl{expressed by} the corresponding logic.
We will often employ the appellations \textsl{nullary} for $0$-ary and \textsl{singulary} for $1$-ary term functions (or for the connectives that induce them).
Given signatures $\Sigma\subseteq\Sigma^\prime$ and sets $P\subseteq P^\prime$ of sentential variables, a \textsl{substitution} is a structure-preserving mapping over the corresponding sets of formulas, namely a function $\sigma:P\longrightarrow L_{\Sigma^\prime}(P^\prime)$ which 
extends uniquely to a homomorphism $\sigma^\star:L_\Sigma(P)\longrightarrow L_{\Sigma^\prime}(P^\prime)$ by setting $\sigma^\star(\conn(C_1,\ldots,C_k)):=\conn(\sigma^\star(C_1),\ldots,\sigma^\star(C_k))$ for every $\conn\in\Sigma^{(k)}$. We shall refer to~$\sigma^\star(C)$ more simply as~$C^\sigma$. The latter notation is extended in the natural way to sets of formulas: given $\Pi\subseteq L_{\Sigma}(P)$, $\Pi^\sigma$ denotes $\{C^\sigma:C\in\Pi\}$.

A \textsl{logic}~$\cL$ over the language~$L_\Sigma(P)$ is here a structure $\tuple{L_\Sigma(P),\der}$ equipped with a so-called \textsl{consequence relation}~$\der\;\subseteq\pow{L_\Sigma(P)}\times L_\Sigma(P)$ respecting (\textbf{R}) $\Gamma\cup\{C\}\der C$; 
  (\textbf{M}) if $\Gamma\der C$ then $\Gamma\cup\Delta\der C$; (\textbf{T}) if $\Gamma\der D$ for every $D\in\Delta$ and $\Gamma\cup\Delta\der C$, then $\Gamma\der C$; and (\textbf{SI}) if $\Gamma\der C$ then $\Gamma^\sigma\der C^\sigma$ for any substitution $\sigma:P\longrightarrow L_{\Sigma}(P)$. Any assertion in the form $\Pi\der E$ will be called a \textsl{consecution}, and may be read as `$E$ follows from~$\Pi$ (according to~$\cL$)'; whenever $\tuple{\Pi,E}\in\;\der$ one may say that \textsl{$\cL$ sanctions $\Pi\der E$}.  Henceforth, union operations and braces will be omitted from consecutions, and the reader will be trusted to appropriately supply them in order to make the expressions well-typed.  

Given two logics $\cL=\tuple{L_\Sigma(P),\der}$ and $\cL^\prime=\tuple{L_{\Sigma^\prime}(P^\prime),\der^\prime}$, 
we say that \textsl{$\cL^\prime$ extends~$\cL$} in case $P\subseteq P^\prime$, $\Sigma\subseteq\Sigma^\prime$ and $\der\;\subseteq\;\der^\prime$.  In case $\Gamma\der C$ iff $\Gamma\der^\prime C$, for every $\Gamma\cup\{C\}\subseteq L_\Sigma(P)$, we say that the extension is \textsl{conservative}.  So, in a conservative extension no new consecutions are added in the `reduced language' $L_\Sigma(P)$ by the `bigger' logic~$\cL^\prime$ to those sanctioned by the `smaller' logic~$\cL$.
%
Fixed $\cL=\tuple{L_\Sigma(P),\der}$, and given $\Sigma\subseteq\Sigma^\prime$ and $P\subseteq P^\prime$, 
let $\Subst$ collect all the substitutions $\sigma:P\longrightarrow L_{\Sigma^\prime}(P^\prime)$.
We say that a formula~$B$ of $L_{\Sigma^\prime}(P^\prime)$ is a \textsl{substitution instance} of a formula~$A$ of $L_{\Sigma}(P)$ if there is a substitution $\sigma\in\Subst$ such that $A^\sigma=B$.
A \textsl{natural conservative extension induced by~$\cL$} is given by the logic $\cL^\prime=\tuple{L_{\Sigma^\prime}(P^\prime),\der^\prime}$ equipped by the smallest sub\-sti\-tu\-tion-in\-var\-i\-ant consequence relation preserving the consecutions of~$\cL$ inside the extended language, that is, such that $\Gamma\der^\prime C$ iff there is some $\Delta\cup\{D\}\subseteq L_{\Sigma}(P)$ and some $\sigma\in\Subst$ such that $\Delta\der D$, where $\Delta^{\!\sigma}=\Gamma$ and $D^\sigma=C$.  In what follows, when we simply enrich the signature and the set of sentential variables, we shall not distinguish between a given logic and its natural conservative extension.

Two formulas~$C$ and~$D$ of a logic $\cL=\tuple{L_\Sigma(P),\der}$ are said to be \textsl{logically equivalent according to} $\cL$ if $C\der D$ and $D\der C$; two sets of formulas~$\Gamma$ and~$\Delta$ are said to be logically equivalent according to~$\cL$ if each formula from each one of these sets may correctly be said to follow from the other set of formulas (notation: $\EqDiv{\cL}{\Gamma}{\Delta}$).
We call the set of formulas $\Gamma\subseteq L_\Sigma(P)$ \textsl{trivial} (\textsl{according to~$\cL$}) if $\EqDiv{\cL}{\Gamma}{L_\Sigma(P)}$.
We will say that the logic~$\cL$ is \textsl{consistent} if its consequence relation~$\der$ does not sanction all possible consecutions over a given language, that is, if there is some set of formulas $\Pi\cup\{E\}$ such that $\Pi\not\der E$, in other words, if~$\cL$ contains some non-trivial set of formulas~$\Pi$; we call a logic \textsl{inconsistent} if it fails to be consistent.  
We say that a set of formulas~$\Pi$ in $\cL=\tuple{L_\Sigma(P),\der}$ is \textsl{$\der$-explosive} in case $\Pi^\sigma\der E$ for every substitution~$\sigma:P\longrightarrow L_{\Sigma}(P)$ and every formula~$E$.  Obviously, an inconsistent logic $\cL$ is one in which the empty set of formulas is $\der$-explosive.

Fixed a denumerable set of sentential variables~$P$ and a non-empty signature~$\Sigma$, let $\conw=\bigcup\Sigma$.  To simplify notation, whenever the context eliminates any risk of ambiguity, we will sometimes refer to $L_\Sigma(P)$ more simply as~$L_{\conw}$.  For instance, given the $2$-place connective~$\land$, in writing $L_\land$ we refer to the language generated by~$P$ using solely the connective~$\land$, and similarly for the $2$-place connective~$\lor$ and the language~$L_\lor$.  Taking the union of the corresponding signatures, in writing $L_{\land\lor}$ we refer to the \textsl{mixed language} whose formulas may be built using exclusively the connectives~$\land$ and~$\lor$.

\begin{example}\label{abstractCL}
For an illustration involving some familiar connectives, 
a logic $\cL=\tuple{L_\Sigma(P),\der}$ will be said to be \textsl{$\conn$-classical} if, for every set of formulas $\Gamma\cup\{A,B,C\}$ in its language (see, for instance,~\cite{Humberstone2011-HUMTC}): 
\smallskip

\noindent
\scalebox{0.95}{
\begin{tabular}{ll}
  {[$\conn=\top\in\Sigma^{(0)}$]} & $\Gamma,\top\der C$ implies $\Gamma\der C$ \\
  {[$\conn=\bot\in\Sigma^{(0)}$]} & $\Gamma\der \bot$ implies $\Gamma\der C$ \\ 
  {[$\conn=\neg\in\Sigma^{(1)}$]} & (i) $A,\neg A\der C$;  and 
  (ii) $\Gamma,A\der C$ and $\Gamma,\neg A\der C$ imply $\Gamma\der C$ \\
  {[$\conn=\land\in\Sigma^{(2)}$]} & $\Gamma,A\land B\der C$ iff $\Gamma,A,B\der C$ \\
  {[$\conn=\lor\in\Sigma^{(2)}$]} & $\Gamma,A\lor B\der C$ iff $\Gamma,A\der C$ and $\Gamma,B\der C$ \\
  {[$\conn=\imp\in\Sigma^{(2)}$]} & (i) $A,A\imp B\der B$; 
  (ii) $\Gamma,A\imp B\der C$ implies $\Gamma,B\der C$; \\
  & and (iii) $\Gamma,A\der C$ and $\Gamma,A\imp B\der C$ implies $\Gamma\der C$\\
\end{tabular}}\\[1mm]
\noindent
Other classical connectives may also be given appropriate abstract characterizations, `upon demand'.  
%
%
If the logic $\cL_{\conw}=\tuple{L_{\conw},\der}$ is $\conn$-classical for every $\conn\in \conw$, we call it \textsl{the logic of classical~$\conw$} and denote  it by $\cB_{\conw}$.
\hfill$\triangle$
\end{example}

Let~$\varphi$ be some $k$-ary term function expressed by the logic $\cL=\tuple{L_\Sigma(P),\der}$.  
If $\varphi(p_1,\ldots,p_k)\der p_j$ for some $1\leq j\leq k$, we say that~\textsl{$\varphi$ is projective over its $j$-th component}.  Such term function is called a \textsl{projection-conjunction} if it is logically equivalent to its set of projective components, i.e., if there is some $J\subseteq\{1,2,\ldots,k\}$ such that (i) $\varphi(p_1,\ldots,p_k)\der p_j$ for every $j\in J$ and (ii) $\{p_j:j\in J\}\der \varphi(p_1,\ldots,p_k)$.  
In case $\varphi(p_1,\ldots,p_k)\der p_{k+1}$, we say that~$\varphi$ is \textsl{bottom-like}.
We will call~$\varphi$ \textsl{top-like} if $\der \varphi(p_1,\ldots,p_k)$; do note that the latter is a particular case of projection-conjunction (take $J=\varnothing$).
Classical conjunction is another particular case of projection-conjunction (take $n=2$ and $J=\{1,2\}$); its singulary version (take $n=1$ and $J=\{1\}$) corresponds to the so-called \textsl{affirmation connective}.
A term function that is neither top-like nor bottom-like will here be called \textsl{significant}; 
if in addition it is not a projection-conjunction, we will call it \textsl{very significant}; in each case, connectives shall inherit the corresponding terminology from the term functions that they induce. Note that being not very significant means being either bottom-like or a projection-conjunction.

%

%
%
%


\subsection{Hilbert-style proof systems}
\label{subsec:hilbert}

\noindent
One of the standard ways of presenting a logic is through the so-called `axiomatic approach'.  We call \textsl{Hilbert calculus} over the language $L_\Sigma(P)$ any structure $\cH=\tuple{L_\Sigma(P),\Rule}$, \textsl{presented by} a set of \textsl{inference rules} $\Rule\subseteq\pow{L_\Sigma(P)}\times L_\Sigma(P)$.  An inference rule $\infrule=\tuple{\Delta,D}\in\Rule$ is said to have \textsl{premises}~$\Delta$ and \textsl{conclusion}~$D$, and is often represented in tree-format by writing~$\frac{\Delta}{D}{{}^{{}_\infrule}}$, 
or $\frac{D_{{}^1} \; \ldots\; D_{n}}{D}{{}^{{}_\infrule}}$ when $\Delta=\{D_1,\ldots,D_n\}$, or $\frac{}{D}{{}^{{}_\infrule}}$ in case $\Delta=\varnothing$.  The latter type of rule, with an empty set of premises, is called \textsl{axiom}.

Fix in what follows a Hilbert calculus presentation~$\cH=\tuple{L_\Sigma(P),\Rule}$, and consider signatures $\Sigma\subseteq\Sigma^\prime$ and sets $P\subseteq P^\prime$ of sentential variables, with the corresponding collection $\Subst$ of substitutions from $L_{\Sigma}(P)$ into $L_{\Sigma^\prime}(P^\prime)$.  
Given formulas $\Gamma\cup\{C\}\subseteq L_{\Sigma^\prime}(P^\prime)$, a \textsl{rule application allowing to infer~$C$ from~$\Gamma$ according to~$\cH$} corresponds to a pair $\tuple{\infrule,\sigma}$ such that $\frac{\Delta}{D}{{}^{{}_\infrule}}$ is in~$\Rule$ and $\sigma\in\Subst$, while $\Delta^{\!\sigma}=\Gamma$ and $D^\sigma=C$.
Such rule applications are often annotated with the names of the corresponding rules being applied.
In case $\Delta=\varnothing$ we may also refer to the corresponding rule application as an \textsl{instance of an axiom}.
As usual, an \textsl{$\cH$-derivation of~$C$ from~$\Gamma$} is a tree~$\cT$ with the following features: 
(i)~all nodes are labelled with substitution instances of formulas of $L_{\Sigma}(P)$; 
(ii)~the root is labelled with~$C$; 
(iii)~the existing leaves are all labelled with formulas from~$\Gamma$; 
(iv)~all non-leaf nodes are labelled with instances of axioms, or with premises from~$\Gamma$, or with formulas inferred by rule applications from the formulas labelling the roots of certain subtrees of~$\cT$, using the inference rules~$\Rule$ of~$\cH$.
It is not hard to see that~$\cH$ induces a logic $\cL_\cH=\tuple{L_{\Sigma^\prime}(P^\prime),\der_{\Rule}}$ by setting $\Gamma\der_{\Rule} C$ iff there is some $\cH$-derivation of~$C$ from~$\Gamma$; indeed, we may safely leave to the reader the task of verifying that postulates (\textbf{R}), (\textbf{M}), (\textbf{T}) and (\textbf{SI}) are all respected by~$\der_{\Rule}$.  
We shall say that a logic $\cL=\tuple{L_\Sigma(P),\der}$ is \textsl{characterized by a Hilbert calculus~$\cH=\tuple{L_\Sigma(P),\Rule}$} iff $\der\;=\;\der_{\Rule}$.

\begin{example}\label{hilbertCL}
We revisit the well-known connectives of classical logic whose inferential behaviors were 
described in \Cref{abstractCL}. What follows are the rules of appropriate Hilbert calculi for the logics $\cL_\conn=\tuple{L_\conn,\der_{\Rule_\conn}}$, where $p,q,r\in P$:
\smallskip

\noindent
\scalebox{0.95}{
\begin{tabular}{ll}
  {[$\conn=\top$]} & $\frac{}{\;\;\top\;\;}{{}^{{}_{{}_{\mathsf{t}1}}}}$\\[2mm]
  {[$\conn=\bot$]} & $\frac{\;\;\bot\;\;}{p}{{}^{{}_{{}_{\mathsf{b}1}}}}$\\[2mm]
  {[$\conn=\neg$]} & 
    $\frac{p}{\;\;\neg\neg p\;\;}{{}^{{}_{{}_{\mathsf{n}1}}}}\quad
    \frac{\;\;\neg \neg p\;\;}{p}{{}^{{}_{{}_{\mathsf{n}2}}}}\quad 
    \frac{\;\;p\qquad  \neg p\;\;}{q}{{}^{{}_{{}_{\mathsf{n}3}}}}$\\[2mm]
  {[$\conn=\land$]} & 
    $\frac{\;p\e q\;}{\;\; p\;\;}{{}^{{}_{{}_{\mathsf{c}1}}}}\quad
    \frac{\;p\e q\;}{\;\; q\;\;}{{}^{{}_{{}_{\mathsf{c}2}}}}\quad
    \frac{\;\;p\qquad  q\;\;}{p\e q}{{}^{{}_{{}_{\mathsf{c}3}}}}$\\[2mm]
  {[$\conn=\lor$]} & 
    $\frac{p}{p\ou q}{{}^{{}_{{}_{\mathsf{d}1}}}} \quad
    \frac{p\ou p}{p}{{}^{{}_{{}_{\mathsf{d}2}}}} \quad 
    \frac{p\ou q}{q\ou p}{{}^{{}_{{}_{\mathsf{d}3}}}} \quad
    \frac{p\ou (q \ou r)}{(p \ou q)\ou r}{{}^{{}_{{}_{\mathsf{d}4}}}}$\\[2mm]
  {[$\conn=\imp$]} & 
  $\frac{}{p\imp (q\imp p)}{{}^{{}_{{}_{\mathsf{i}1}}}}\hspace{2mm}
  \frac{}{(p\imp(q\imp r))\imp ((p\imp  q)\imp (p\imp  r))}{{}^{{}_{{}_{\mathsf{i}2}}}}\hspace{2mm}
  \frac{}{((p\imp  q)\imp  p)\imp  p}{{}^{{}_{{}_{\mathsf{i}3}}}}\hspace{2mm}
  \frac{\;p\quad p\imp  q\;}{q}{{}^{{}_{{}_{\mathsf{i}4}}}}$\\[2mm]
 \end{tabular}}\\[1mm]
 Of course, other classical connectives can also be axiomatized. For instance, the bi-implication~$\eq$ defined by the term function $\lambda pq.(p\imp q)\land(q\imp p)$ may be presented by:
\\[1mm]
\noindent
{%
\scalebox{0.95}{\begin{tabular}{ll} 
   {[$\conn=\eq$]} & 
  $\frac{}{(p\eq(q\eq r))\eq ((p\eq q)\eq r)}{{}^{{}_{{}_{\mathsf{e}1}}}}\quad
  \frac{}{((p\eq r)\eq(q\eq p))\eq(r\eq q)}{{}^{{}_{{}_{\mathsf{e}2}}}}\quad 
  \frac{\;p\quad p\eq  q\;}{q}{{}^{{}_{{}_{\mathsf{e}3}}}}$ 

\end{tabular}}
}
\hfill$\triangle$
\end{example}

\subsection{Matrix semantics}
\label{subsec:semantics-D}

\noindent
Another standard way of presenting a logic is through `model-theoretic semantics'.  
A \textsl{matrix semantics}~$\cM$ over the language $L_\Sigma(P)$ is a collection of logical matrices over $L_\Sigma(P)$, where by a \textsl{logical matrix}~$\LogMat$ over $L_\Sigma(P)$ we mean a structure $\LogMat=\tuple{\MatV,\MatD,\MatC}$ in which the set~$\MatV$ is said to contain \textsl{truth-values}, each truth-value in $\MatD\subseteq\MatV$ is called \textsl{designated}, and for each $\conn\in\Sigma^{(k)}$ there is in~$\MatC$ a $k$-ary \textsl{interpretation} mapping $\mytilde{\conn}$ 
over~$\MatV$.  
A \textsl{valuation} over a logical matrix~$\LogMat$ is any mapping $\val:L_\Sigma(P)\longrightarrow\MatV$ such that $\val(\conn(C_1,\ldots,C_k))=\mytilde{\conn}(\val(C_1),\ldots,\val(C_k))$ for every $\conn\in\Sigma^{(k)}$. We denote by $\cV_\LogMat$ the set of all valuations over $\LogMat$, and say that the valuation~$\val$ over $\LogMat$ \textsl{satisfies} a formula $C\in L_\Sigma(P)$ if $\val(C)\in\MatD$.
Note that a valuation 
might be thought more simply as a mapping $\val:P\longrightarrow\MatV$, given that 
there is a unique extension of~$\val$ as a homomorphism from $L_\Sigma(P)$ into the similar algebra having~$\MatV$ as carrier and having each symbol $\conn\in\Sigma^{(k)}$ interpreted as the $k$-ary operator $\mytilde{\conn}:\MatV^k\longrightarrow\MatV$.
Analogously, each $k$-ary term function $\lambda p_1\ldots p_k.\varphi$ over $L_\Sigma(P)$ is interpreted by a logical matrix $\LogMat$ in the natural way as a $k$-ary operator $\mytilde{\varphi}:\MatV^k\longrightarrow\MatV$.
We shall call $\mathcal{\cC}_\LogMat^\Sigma$ the collection of all term functions compositionally derived over~$\Sigma$ and interpreted through $\LogMat$; in the literature on Universal Algebra, $\mathcal{\cC}_\LogMat^\Sigma$ is known as the \textsl{clone} of operations definable by term functions allowed by the signature~$\Sigma$, under the interpretation provided by~$\LogMat$.

Given a valuation $\val:L_\Sigma(P)\longrightarrow\MatV$, where the truth-values $\MatD\subseteq\MatV$ are taken as designated, and given formulas $\Gamma\cup\{C\}\subseteq L_\Sigma(P)$, we say that~$C$ \textsl{follows from}~$\Gamma$ \textsl{according to}~$\val$ (notation: $\Gamma\der_\val C$) iff it is not the case that~$\val$ simultaneously satisfies all formulas in~$\Gamma$ while failing to satisfy~$C$.  
We extend the definition to a set~$\cV$ of valuations by setting $\Gamma\der_{\cV}C$ iff $\Gamma\der_{\val}C$ for every $\val\in\cV$, that is, $\der_{\cV}\;=\bigcap_{\val\in\cV}(\der_\val)$.
%
On its turn, a matrix semantics~$\cM$ defines a consequence relation~$\der_\cM$ by setting $\Gamma\der_{\cM}C$ iff $\Gamma\der_{\cV_\LogMat}C$ for every $\LogMat\in\cM$, that is, $\der_{\cM}\;=\;\bigcap_{\LogMat\in\cM}(\der_{\cV_\LogMat})$. If we set $\cV_\cM:=\bigcup_{\LogMat\in\cM}({\cV_\LogMat})$, it should be clear that $\der_{\cM}\;=\;\der_{\cV_\cM}$.
We shall say that a logic $\cL=\tuple{L_\Sigma(P),\der}$ is \textsl{characterized by a matrix semantics~$\cM$} iff $\der\;=\;\der_{\cM}$.
{To make precise what we mean herefrom by a `fragment' of a given logic,}
given a subsignature $\Sigma^\prime\subseteq\Sigma$, a \textsl{sublogic}~$\cL^\prime$ of~$\cL$ is a logic $\cL^\prime=\tuple{L_{\Sigma^\prime}(P),\der^\prime}$ characterized by a matrix semantics~$\cM^\prime$ such that the interpretation~$\mytilde{\conn}$ of the connective~$\conn$ is the same at both~$\cM$ and~$\cM^\prime$, for every $\conn\in\Sigma^\prime$ and every $\LogMat\in\cM^\prime$.  It is not hard to see that $\cL$ will in this case consist in a conservative extension of~$\cL^\prime$.
There are well-known results in the literature to the effect that any logic whose consequence relation satisfies $(\mathbf{R})$, $(\mathbf{M})$, $(\mathbf{T})$ and $(\mathbf{SI})$ may be characterized by a 
matrix semantics~\cite{Wojcicki88}.

\begin{example}\label{MatSem1}
We now revisit yet again the connectives of classical logic that received our attention at \Cref{abstractCL,hilbertCL}.
Let $\MatV=\{0,1\}$ and $\MatD=\{1\}$.  
Given a 
logical matrix $\tuple{\MatV,\MatD,\MatC}$, we will call it \textsl{$\conn$-Boolean} if:
\smallskip

\noindent
\scalebox{0.95}{%
\begin{tabular}{ll}
  {[$\conn=\top$]} & $\mytilde{\top}=1$\\[1mm]
  {[$\conn=\bot$]} & $\mytilde{\bot}=0$\\[1mm]
  {[$\conn=\neg$]}
    & (i) $\mytilde{\neg}(1)=0$; and (ii) $\mytilde{\neg}(0)=1$\\[1mm]
  {[$\conn=\land$]}  
    & (i) $\mytilde{\land}(1,1)=1$; and (ii) $\mytilde{\land}(x,y)=0$ otherwise\\[1mm]
  {[$\conn=\lor$]}  
    & (i) $\mytilde{\lor}(0,0)=0$; and (ii) $\mytilde{\lor}(x,y)=1$ otherwise\\
\end{tabular}\\[1mm]
}
\scalebox{0.95}{%
\begin{tabular}{ll}
  {[$\conn=\imp$]}  
    & (i) $\mytilde{\imp}(1,0)=0$; and (ii) $\mytilde{\imp}(x,y)=1$ otherwise\\
\end{tabular}\\[1mm]
}
\scalebox{0.95}{%
\begin{tabular}{ll}
  {[$\conn=\eq$]}  
    & (i) $\mytilde{\eq}(x,y)=1$ if $x=y$; and (ii) $\mytilde{\eq}(x,y)=0$ otherwise\\[1mm]
\end{tabular}}\\[1mm]
\noindent It is not difficult to show that, if~$\cM$ is a collection of $\conn$-Boolean logical matrices, 
the logic $\cL_\conn=\tuple{L_\conn,\der_{\cM}}$ is $\conn$-classical.
Conversely, every $\conn$-classical logic may be characterized by a single $\conn$-Boolean logical matrix. 

We take the chance to introduce a few other connectives that will be useful later on.  These connectives may be primitive in some sublogics of classical logic, but can also be 
defined by term functions involving the previously mentioned connectives, as follows:
{\noindent\center{
\scalebox{0.95}{
\begin{tabular}{rll}
$\coimpl$ & $:=$ & $\lambda p q.\neg(p\imp q)$\\
$+$ & $:=$ & $\lambda p q.(p\land \neg q)\lor(q\land \neg p)$\\
$\ifelse$ & $:=$ & $\lambda p q r.(p\imp q)\land(\neg p\imp r)$\\
$T^n_0$ & $:=$& $\lambda p_1\dots p_n.\top$, for $n\geq 0$\\ 
$T^n_n$ & $:=$& $\lambda p_1\dots p_n.p_1\land \dots\land p_n$, for $n>0$\\ 
$T^n_k$ & $:=$& $\lambda p_1\dots p_n.(p_1\land T^{n-1}_{k-1}(p_2,\dots,p_n))\lor T^{n-1}_{k}(p_2,\dots,p_n)$, for $n>k>0$\\ 
\end{tabular}}\\[1mm]
}}

\noindent
Note that a logical matrix containing such connectives is $\conn$-Boolean if:\smallskip

\noindent
\scalebox{0.95}{\begin{tabular}{ll}
{[$\conn=\,\coimpl$]}  
    & (i) $\mynewtilde{\coimpl}(1,0)=1$; and (ii) $\mynewtilde{\coimpl}(x,y)=0$ otherwise\\[1mm]
    {[$\conn=+$]}  
    & (i) $\mytilde{+}(x,y)=0$ if $x=y$; and (ii) $\mytilde{+}(x,y)=1$ otherwise\\[1mm]
%

{[$\conn=\ifelse$]}  
    & (i) $\mytilde{\ifelse}(1,y,z)=y$; and (ii) $\mytilde{\ifelse}(0,y,z)=z$\\[1mm]

{[$\conn=\mytilde{T_k^n}$]}  
    & (i) $\widetilde{T_k^n}(x_1,\ldots,x_n)=0$ if $\size(\{i:x_i=1\})<k$; \\
    & and (ii) $\widetilde{T_k^n}(x_1,\ldots,x_n)=1$ otherwise
\end{tabular}}\\[-6mm]
%
%

\hfill$\triangle$
\end{example}

In what follows we shall use the expression \textsl{two-valued logic} to refer to any 
logic 
characterized by the logical matrix
$\{\MatV_\TWO,\MatD_\TWO,\MatC\}$, where $\MatV_\TWO=\{0,1\}$ and $\MatD_\TWO=\{1\}$, and use the expression \textsl{Boolean connectives} to refer to the corresponding $2$-valued interpretation of the symbols in~$\Sigma$ (see \Cref{MatSem1}).  From this perspective, whenever we deal with a two-valued logic whose language is expressive enough, modulo its interpretation through a 
matrix semantics, to allow for all operators of a Boolean algebra $\mathsf{BA}$ over~$\MatV_\TWO$ to be compositionally derived, we will say that we are dealing with \textsl{classical logic}.  
Alternatively, whenever the underlying signature turns out to be of lesser importance, one might say that classical logic is the two-valued logic that corresponds to the clone $\cC_{\mathsf{BA}}$ 
containing all operations over~$\MatV_\TWO$.  Due to such level of expressiveness, classical logic is said thus to be \textsl{functionally complete} (\textsl{over~$\MatV_\TWO$}). 
On those grounds, it follows that all two-valued logics may be said to be sublogics of classical logic.  
The paper~\cite{Rautenberg1981} shows how to provide a Hilbert calculus presentation for any proper two-valued sublogic of classical logic. %
%

Emil Post's characterization of functional completeness for classical logic~\cite{Post41,Lau:2006:FAF:1205006} is very informative. First of all, it tells us that there are exactly five maximal functionally incomplete clones (i.e, co-atoms in Post's lattice), namely:
{\noindent\center{
\scalebox{0.95}{\begin{tabular}{llllllllllllllllllllllllll}
$\mathbb{P}_0$ & $=$ & $\cC_{\mathsf{BA}}^{\lor\,\coimplsub}$ &\qquad\qquad&
$\mathbb{P}_1$ & $=$ & $\cC_{\mathsf{BA}}^{\land\imp}$ &\qquad\qquad&
$\mathbb{A}$ & $=$ & $\cC_{\mathsf{BA}}^{\eq\bot}$ & \qquad\qquad&
$\mathbb{M}$ & $=$ & $\cC_{\mathsf{BA}}^{\land\lor\top\bot}$& \qquad\qquad&
$\mathbb{D}$ & $=$ & $\cC_{\mathsf{BA}}^{T^3_2 \neg}$
\end{tabular}}\\[2mm]
}}
\noindent
The Boolean top-like connectives form the clone $\mathbb{UP}_1=\cC_{\mathsf{BA}}^{\top}$. 
As it will be useful later on, we mention that an analysis of Post's lattice also reveals that there are also a number of clones which are maximal with respect to~$\top$, i.e., functionally incomplete clones that become functionally complete by the mere addition of the nullary connective~$\top$ (or actually any other connective from $\mathbb{UP}_1$). 
In terms of the Post's lattice, the clones whose join with $\mathbb{UP}_1$ result in $\cC_{\mathsf{BA}}$ are:
{\noindent\center{
\scalebox{0.95}{\begin{tabular}{llllllllllllllllllllllllll}
$\mathbb{D}$&\qquad\qquad\qquad&
$\mathbb{T}^\infty_0$ & $=$ & $\cC_{\mathsf{BA}}^{\coimplsub}$ &\qquad\qquad\qquad&
$\mathbb{T}^n_0$ & $=$ & $\cC_{\mathsf{BA}}^{T^{n+1}_n\coimplsub}$ (for $n\in\nats$) 
\end{tabular}}\\[2mm]
}}

\noindent
It is worth noting that $\mathbb{T}^1_0=\mathbb{P}_0$. 

If a logic turns out to be 
characterized by a single logical matrix with a finite set of truth-values, a `tabular' decision procedure is associable to its consequence relation based on the fact that the valuations over a finite number of sentential variables may be divided into a finite number of equivalence classes, and one may then simply do an exhaustive check for satisfaction whenever a finite number of formulas is involved in a given consecution.
More generally, we will say that a logic~$\cL$ is \textsl{locally tabular} if the relation of logical equivalence $\EqDiv{\cL}{}{}$ partitions the language $L_\Sigma(\{p_1,\ldots,p_k\})$, freely generated by the signature $\Sigma$ over a finite set of sentential variables, into a finite number of equivalence classes. 
It is clear that all two-valued sublogics of classical logic are locally tabular.
On the same line, it should be equally clear that any logic that fails to be locally tabular 
cannot be characterized by a 
logical matrix with a finite set of truth-values. 


\section{Combining logics}
\label{subsec:fibring}

\noindent
Given two logics $\cL_a=\tuple{L_{\Sigma_a\!}(P),\der_a}$ and $\cL_b=\tuple{L_{\Sigma_b}(P),\der_b}$, their \textsl{fibring} is defined as the smallest logic $\fib{\cL_a}{\cL_b}=\tuple{L_{a\fibS b}(P),\der_{a\fibS b}}$, where $L_{a\fibS b}(P)=L_{\Sigma_a\cup \Sigma_b}(P)$, and where $\der_a\;\subseteq\;\der_{a\fibS b}$ and $\der_b\;\subseteq\;\der_{a\fibS b}$, that is, it consists in the smallest logic over the joint signature that extends both logics given as input.  Typically, one could expect the combined logic $\fib{\cL_a}{\cL_b}$ to \textit{conservatively} extend both~$\cL_a$ and~$\cL_b$.  That is not always possible, though (consider for instance the combination of a consistent logic with an inconsistent logic).  
A full characterization of the combinations of logics through disjoint fibring that yield conservative extensions of both input logics may be found at~\cite{apconsfinval}.
The fibring of two logics is called \textsl{disjoint} (or \textsl{unconstrained}) if their signatures are disjoint.
A neat characterization of fibring is given by way of Hilbert calculi: Given $\der_a\;=\;\der_{\Rule_a}$ and $\der_b\;=\;\der_{\Rule_b}$, where $\Rule_a$ and $\Rule_b$ are sets of inference rules, we may set $\Rule_{a\fibS b}:=\Rule_a\cup\Rule_b$ and then note that $\fib{\cL_a}{\cL_b}=\tuple{L_{a\fibS b}(P),\der_{\Rule_{a\fibS b}}}$.

Insofar as a logic may be said to codify inferential practices used in reasoning, the (conservative) combination of two logics should not only allow one to faithfully recover the original forms of reasoning sanctioned by each ingredient logic over the respective underlying language, but should also allow the same forms of reasoning ---and no more--- to obtain over the mixed language.
Hence, it is natural to think that each of the ingredient logics cannot see past the connectives belonging to the other ingredient logic ---the latter connectives look like `monoliths' whose internal structure is inaccessible from the outside.

To put things more formally, given signatures $\Sigma\subseteq\Sigma^\prime$ and given a formula $C\in L_{\Sigma^\prime}(P)$, we call \textsl{$\Sigma$-monoliths} the largest subformulas of~$C$ whose heads belong to $\Sigma^\prime\setminus\Sigma$.  Accordingly, the set $\Mmon_{\Sigma}(C)\subseteq \sub(C)$ of all $\Sigma$-monoliths of~$C$ is defined by setting:

\mbox{
\centering{
\scalebox{.95}{\parbox{10cm}{
\begin{align*}
	\Mmon_{\Sigma}(C) & := 
	\begin{cases}
		\varnothing & \mbox{ if } C\in P,\\
		\bigcup_{i=1}^k\Mmon_{\Sigma}(C_i) & \mbox{ if } C = \conn(C_1,\ldots,C_k) \mbox{ and } \conn\in \Sigma^{(k)},\\
		\{C\} & \mbox{ otherwise.}
	\end{cases}
\end{align*}
}}}}

\noindent
This definition may be extended to sets of formulas in the usual way, by setting $\Mmon_{\Sigma}(\Gamma):=\bigcup_{C\in\Gamma}\Mmon_{\Sigma}(C)$.  Note, in particular, that $\Mmon_{\Sigma}(\Gamma)=\varnothing$ if $\Gamma\subseteq L_{\Sigma}(P)$.  
{From the viewpoint of the signature~$\Sigma$, monoliths may be seen as `skeletal' (sentential) variables that represent formulas of $L_{\Sigma^\prime}(P)$ whose inner structure cannot be taken advantage of.  In what follows, let $X^{\Sigma^\prime\!}:=\{x_D:D\in L_{\Sigma^\prime}(P)\}$ be a set of fresh symbols for sentential variables.  Given $C\in L_{\Sigma^\prime}(P)$, in order to represent the \textsl{$\Sigma$-skeleton of~$C$} we  define the function $\skel_\Sigma:L_{\Sigma^\prime}(P)\longrightarrow L_{\Sigma^\prime}(P\cup X^{\Sigma^\prime})$ by setting:}

\mbox{
\centering{
\scalebox{.95}{\parbox{10cm}{
\begin{align*}
	\skel\nolimits_{\Sigma}(C)&:= 
	\begin{cases}
		C & \mbox{ if } C\in P,\\
		\conn(\skel\nolimits_{\Sigma}(C_1),\ldots,\skel\nolimits_{\Sigma}(C_k))\hspace{-3mm}& \mbox{ if } C = \conn(C_1,\ldots,C_k) \mbox{ and } \conn\in \Sigma^{(k)},\\
		x_{C},& \mbox{ otherwise.}
	\end{cases}
\end{align*}
}}}}

\noindent
Clearly, a skeletal variable~$x_D$ is only really useful in case $\head(D)\in\Sigma^\prime\setminus\Sigma$.

\begin{example}\label{fib-ex1}
Recall from \Cref{hilbertCL} the inference rules characterizing the logic~$\cB_\land$ of classical conjunction and the logic~$\cB_\lor$ of classical disjunction.  As in \Cref{abstractCL}, we let $\cB_{\land\lor}$ refer to a logic that is at once $\land$-classical and $\lor$-classical, and contains no other primitive connectives besides~$\land$ and~$\lor$.
Consider now the fibred logic $\cL_{\fib{\land}{\lor}}:=\fib{\cB_\land}{\cB_\lor}$.  It should be clear that $\der_{\land\fibS\lor}\;\subseteq\;\der_{\land\lor}$.  It is easy to see now that $\EqDiv{\land\fibS\lor}{p\land(p\lor q)}{p}$ (a logical realization of an absorption law of lattice theory).  Indeed, a one-step derivation~$\cD_1$ of~$p$ from $p\land(p\lor q)$ in $\cL_{\land\fibS\lor}$ is obtained simply by an application of rule~$\mathsf{c}1$ to~$p\land(p\lor q)$, and a two-step derivation~$\cD_2$ of $p\land(p\lor q)$ from~$p$ in $\cL_{\land\fibS\lor}$ is obtained by the application of rule $\mathsf{d}1$ to $p$ to obtain $p\lor q$, followed by an application of $\mathsf{c}3$ to~$p$ and $p\lor q$ to obtain $p\land(p\lor q)$.
Note that $\Mmon_{\Sigma_\land}(p\land(p\lor q))=\{p\lor q\}$ and $\Mmon_{\Sigma_\lor}(p\land(p\lor q))=\{p\land(p\lor q)\}$, and note also that $\skel_{\Sigma_\land}(p\land(p\lor q))=p\land x_{p\lor q}$ and $\skel_{\Sigma_\lor}(p\land(p\lor q))=x_{p\land (p\lor q)}$.  This means that from the viewpoint of~$\cB_\land$ the step of~$\cD_2$ in which the foreign rule $\mathsf{d}1$ is used is seen as a `mysterious' passage from~$p$ to a new sentential variable $x_{p\lor q}$ taken \textit{ex nihilo} as an extra hypothesis in the derivation, and from the viewpoint of~$\cB_\lor$ the step of~$\cD_2$ in which the foreign rule $\mathsf{c}3$ is used is seen as the spontaneous introduction of an extra hypothesis $x_{p\land(p\lor q)}$.
At our next example we will however show that the dual absorption law, represented by $\EqDiv{\land\fibS\lor}{p\lor(p\land q)}{p}$, does not hold, even though the corresponding equivalence holds good over all Boolean algebras.  This will prove that $\der_{\land\lor}\;\not\subseteq\;\der_{\land\fibS\lor}$, and thus $\cB_{\land\lor}\not\subseteq\fib{\cB_\land}{\cB_\lor}$.
\hfill$\triangle$
\end{example}

\begin{remark}\label{cor:seqmon}
In a natural conservative extension, where the syntax of a logic is extended with new connectives but no further inference power is added, it is clear that formulas headed by the newly added connectives are treated as monoliths. Hence, the following result from~\cite{deccomp} applies: Given 
$\cL=\tuple{L_\Sigma(P),\der}$, $\Sigma\subseteq\Sigma'$ and $\Delta\cup \{C,D\} \subseteq L_{\Sigma'}(P)$ we have
$\Delta \der D\mbox{ if and only if }\skel\nolimits_{\Sigma}(\Delta)\der \skel\nolimits_{\Sigma}(D).$
\hfill$\triangle$
\end{remark}

We will present next a fundamental result from~\cite{deccomp} that fully describes disjoint mixed reasoning in~$\fib{\cL_{a}}{\cL_{b}}$, viz.\ by identifying the consecutions sanctioned by such combined logic with the help of appropriate consecutions sanctioned by~its ingredient logics~$\cL_{a}$ and~$\cL_{b}$.  Given that consecutions in~$\der_{a\fibS b}$ are justified by alternations of consecutions sanctioned by~$\der_a$ and consecutions sanctioned by~$\der_b$, given a set of mixed formulas $\Delta\subseteq L_{a\fibS b}$, we define the \textsl{saturation $\St_{a\fibS b}(\Delta)$ of}~$\Delta$ as $\bigcup_{n\in\mathbb{N}}\St_{a\fibS b}^n(\Delta)$, where $\St_{a\fibS b}^0(\Delta):=\Delta$ and $\St_{a\fibS b}^{n+1}(\Delta):=\{D\in\sub(\Delta):\St_{a\fibS b}^{n}(\Delta)\der_a D\mbox{ or }\St_{a\fibS b}^{n}(\Delta)\der_b D\}$.  
In addition, given a set of mixed formulas $\Delta\cup\{D\}\subseteq L_{a\fibS b}$, we abbreviate by $\Mx^i_{a\fibS b}(\Delta,D)$ the set of $\Sigma_i$-monoliths $\{C\in\Mmon_{\Sigma_i}(D):\Delta\der_{a\fibS b}C\}$, for each $i\in\{a,b\}$.  Such ancillary notation helps us stating: 

\begin{theorem}\label{bazooka}
Let $\cL_{a}$ and $\cL_{b}$ be two logics, each one characterizable by a single logical matrix.
If $\cL_{a}$ and $\cL_{b}$ have disjoint signatures, the consecutions in the fibred logic $\cL_{a\fibS b}$ are such that  
$\Gamma\der_{a\fibS b}C$ iff the following condition holds good:\smallskip\\
$(\mathbf{Z}^a)$\ \ $\St_{a\fibS b}(\Gamma),\Mx^a_{a\fibS b}(\Gamma,C)\der_a C$\ \  or\ \ $\St_{a\fibS b}(\Gamma)$ is $\der_b$-explosive.
\end{theorem}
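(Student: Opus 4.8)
The plan is to prove the biconditional $\Gamma\der_{a\fibS b}C$ iff $(\mathbf{Z}^a)$ by splitting it into two directions, leaning throughout on the two structural facts at hand: that fibring extends both ingredients ($\der_a\subseteq\der_{a\fibS b}$ and $\der_b\subseteq\der_{a\fibS b}$), and that each $\cL_i$ is characterized by a single matrix $\LogMat_i$. Soundness of the condition is the routine direction. First I would show, by induction on $n$, that every formula in $\St^n_{a\fibS b}(\Gamma)$ is a $\der_{a\fibS b}$-consequence of $\Gamma$: the base case is $(\mathbf{R})$, and at the inductive step a subformula enters the saturation only when it is $\der_a$- or $\der_b$-derivable from the previous stage, hence $\der_{a\fibS b}$-derivable by the inclusions and $(\mathbf{T})$. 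Since the elements of $\Mx^a_{a\fibS b}(\Gamma,C)$ are $\der_{a\fibS b}$-consequences of $\Gamma$ by definition, each disjunct of $(\mathbf{Z}^a)$ then yields $\Gamma\der_{a\fibS b}C$ by one further cut: from $\St_{a\fibS b}(\Gamma),\Mx^a_{a\fibS b}(\Gamma,C)\der_a C$ I lift to $\der_{a\fibS b}$ and apply $(\mathbf{T})$ against $\Gamma\der_{a\fibS b}\St_{a\fibS b}(\Gamma)$; and if $\St_{a\fibS b}(\Gamma)$ is $\der_b$-explosive, the identity substitution gives $\St_{a\fibS b}(\Gamma)\der_b C$, hence $\St_{a\fibS b}(\Gamma)\der_{a\fibS b}C$, and again I cut.

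The substance lies in completeness, which I would establish contrapositively: assuming $(\mathbf{Z}^a)$ fails, I construct a valuation of the fibred matrix semantics that satisfies $\Gamma$ but refutes $C$. Failure of $(\mathbf{Z}^a)$ supplies two witnesses. Reading $\der_a$ as the natural conservative extension that treats $\Sigma_b$-headed formulas as skeletal variables (in the sense of \Cref{cor:seqmon}), the failure of the first disjunct gives an $\LogMat_a$-valuation $v_a\in\cV_{\LogMat_a}$ satisfying $\St_{a\fibS b}(\Gamma)\cup\Mx^a_{a\fibS b}(\Gamma,C)$ with $v_a(C)\notin\MatD_a$; and the failure of the second disjunct, namely non-$\der_b$-explosiveness of $\St_{a\fibS b}(\Gamma)$, supplies a non-collapsing $\LogMat_b$-valuation on the $\Sigma_b$-skeleton of $\St_{a\fibS b}(\Gamma)$, i.e.\ enough room in $\LogMat_b$ to value the $\Sigma_b$-monoliths. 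The core step is then to glue $v_a$ with a compatible family of $\LogMat_b$-valuations into a single valuation $v$ over $L_{a\fibS b}(P)$: on each $\Sigma_a$-skeleton $v$ follows $v_a$, and each $\Sigma_b$-monolith receives an $\LogMat_b$-value whose designation status matches the one already forced by $v_a$. I would carry out this amalgamation by induction on subformula complexity, propagating values outward through the alternating $\Sigma_a$- and $\Sigma_b$-layers, so that $v$ respects every interpretation $\mytilde{\conn}$ and satisfies $\Gamma$ while $v(C)$ stays undesignated, witnessing $\Gamma\not\der_{a\fibS b}C$.

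The main obstacle is precisely this gluing, and it is exactly what the saturation and the $\Mx^a$ set were engineered to make possible. One must assign $\LogMat_b$-values to the $\Sigma_b$-monoliths so that (i) the internal $\Sigma_b$-structure of each monolith is interpreted correctly, and (ii) the resulting designation pattern agrees, on the formulas both components can see, with what $v_a$ demands. The delicate point is coherence on the shared boundary: any subformula whose value is pinned down by one component must not be contradicted by the other. Here $\St_{a\fibS b}(\Gamma)$ plays the decisive role, since saturating under both $\der_a$ and $\der_b$ collects exactly those subformulas whose truth is already settled by mixed reasoning, leaving the remaining monoliths free to be valued independently; and non-$\der_b$-explosiveness guarantees that the $b$-side of this valuation can be completed without trivializing. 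Once $v$ is verified to be a genuine valuation of the fibred semantics with the required satisfaction profile, the contrapositive is complete, and I expect the verification of coherence in step (ii) — rather than the existence of the local valuations — to be where the real work concentrates.
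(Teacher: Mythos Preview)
The paper does not prove this theorem at all: immediately after stating it, the authors note that ``the original formulation of this result in~\cite{deccomp} was based on a slightly more sophisticated notion of saturation, which reduces to the above one'' in the single-matrix case. \Cref{bazooka} is imported wholesale from~\cite{deccomp}, so there is nothing in the present paper for your sketch to be compared against; your attempt simply goes well beyond what the paper itself undertakes.

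As for the substance of your sketch, the soundness direction is fine, but the completeness direction has a real gap that you have not bridged. You speak of ``a valuation of the fibred matrix semantics'', yet no such semantics has been defined: the fibring $\fib{\cL_a}{\cL_b}$ is introduced here purely proof-theoretically, as the logic generated by $\Rule_a\cup\Rule_b$. To show $\Gamma\not\der_{a\fibS b}C$ you need either a direct syntactic argument or a structure that (i) validates every instance of every rule in $\Rule_a\cup\Rule_b$ and (ii) provides a valuation separating $\Gamma$ from $C$. Your glued object~$v$ is described as taking values in $\LogMat_a$ on $\Sigma_a$-layers and in $\LogMat_b$ on $\Sigma_b$-layers, but you never say what single algebraic structure $v$ maps into, nor why that structure is a model of the combined calculus under arbitrary substitutions. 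Matching designation status at the boundary is necessary but not sufficient: a $\Sigma_a$-rule instance may have a $\Sigma_b$-headed formula at a position where $v_a$ only sees a skeletal variable, and you must argue that the $\LogMat_b$-value you placed there behaves correctly when the rule is applied. The actual proof in~\cite{deccomp} handles exactly this via a carefully constructed product-like semantics (and a more elaborate saturation), and that construction is the missing ingredient in your outline.
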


\noindent
Note that the roles of $a$ and $b$ may be exchanged in the above theorem, given that the fibring operation is obviously commutative, so we might talk accordingly of a corresponding condition $(\mathbf{Z}^b)$, in case it turns out to be more convenient.  The original formulation of this result in~\cite{deccomp} was based on a slightly more sophisticated notion of saturation, which reduces to the above one in particular when the logics involved in the combination are characterizable by means of a truth-functional semantics (i.e., a matrix semantics involving a single logical matrix), as it is indeed the case for all sublogics of classical logic.

\begin{example}\label{fib-ex2}
Set $a=\land$ and $b=\lor$, $E=p\lor(p\land q)$, and let $\Gamma=\{E\}$ and $C=p$.
Note that (i) $\sub(\Gamma)=\{p,q,p\land q,p\lor(p\land q)\}$.
Moreover, it is clear that (ii) $\Mmon_{\Sigma_a}(p)=\Mmon_{\Sigma_b}(p)=\varnothing$, given that $p\in P$, thus $\Mx^a_{a\fibS b}(\Gamma,C)=\Mx^b_{a\fibS b}(\Gamma,C)=\varnothing$.  We know by the base case of the definition of~$\St$ that (iii) $\St_{a\fibS b}^0(\Gamma)=\Gamma$.  Let us now show that $\St_{a\fibS b}^1(\Gamma)=\Gamma$, from which it follows that $\St_{a\fibS b}(\Gamma)=\Gamma$.  We shall be freely making use of item~$(\mathsf{a})$ of \Cref{cor:seqmon}.  Note first, by (\textbf{R}), that we obviously have $\Gamma\der_c E$, for $c\in\{a,b\}$, and note also that (iv) $\skel_{\Sigma_a}(E)=x_{E}$, (v) $\skel_{\Sigma_b}(E)=p\lor x_{p\land q}$, (vi) $\skel_{\Sigma_a}(p\land q)=p\land q$, (vii) $\skel_{\Sigma_b}(p\land q)=x_{p\land q}$ and (viii) $\skel_{\Sigma_c}(r)=r$ when $r\in\{p,q\}$, for $c\in\{a,b\}$.  To see that $\St_{a\fibS b}^0(\Gamma)\not\der_c D$ for every $D\in\sub(\Gamma)\setminus\{E\}$ in case~$c$ is~$a$ it suffices to invoke (i), (iii), (iv), (vi) and (viii), and set a valuation~$v$ such that $\val(x_E):=1$ and $\val(p)=\val(q):=0$; in case~$c$ is~$b$ it suffices to invoke (i), (iii), (v), (vii) and (viii), and one may even reuse the previous valuation~$v$, just adding the extra requirement that $\val(x_{p\land q}):=0$.  It thus follows from the recursive case of the definition of~$\St$ that $\St_{a\fibS b}^1(\Gamma)=\Gamma$.  It is easy to see, with the help of (iv) and (v), that $\St_{a\fibS b}(\Gamma)=\{E\}$ is neither $\der_a$-explosive nor $\der_b$-explosive.  Therefore, according to condition $(\mathbf{Z}^c)$ in \Cref{bazooka}, to check whether $\Gamma\der_{a\fibS b}C$ one may in this case simply check whether $\Gamma\der_a C$ or $\Gamma\der_b C$.  From the preceding argument about $\St_{a\fibS b}^0(\Gamma)$ we already know that the answer is negative in both cases.  We conclude that $p\lor(p\land q)\not\der_{a\fibS b}p$, thus indeed the fragment of classical logic with conjunction and disjunction as sole primitive connectives must be a non-conservative extension of the fibring of the logic of classical conjunction with the logic of classical disjunction, as we had announced at the end of \Cref{fib-ex1}.
\hfill$\triangle$
\end{example}



The following is the first useful new result of this paper, establishing that conservativity is preserved by disjoint fibring, here proved for the (slightly simpler) case where each logic is characterized by a single logical matrix.

\begin{proposition}\label{prop:conslifts}
Let $\cL_a$ and $\cL_b$ be logics with disjoint signatures, each characterizable by means of a single logical matrix.
If $\cL_a$ and $\cL_b$  conservatively extend logics $\cL_1$ and $\cL_2$, respectively, then $\fib{\cL_a}{\cL_b}$ also conservatively extends $\fib{\cL_1}{\cL_2}$.
\end{proposition}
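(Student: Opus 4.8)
The plan is to derive conservativity from the explicit description of disjoint mixed reasoning provided by \Cref{bazooka}. Write $\cL_a=\tuple{L_{\Sigma_a}(P),\der_a}$, $\cL_b=\tuple{L_{\Sigma_b}(P),\der_b}$, $\cL_1=\tuple{L_{\Sigma_1}(P),\der_1}$ and $\cL_2=\tuple{L_{\Sigma_2}(P),\der_2}$, with $\Sigma_1\subseteq\Sigma_a$ and $\Sigma_2\subseteq\Sigma_b$. Since $\Sigma_a\cap\Sigma_b=\varnothing$, also $\Sigma_1\cap\Sigma_2=\varnothing$, so $\fib{\cL_1}{\cL_2}$ is again a disjoint fibring. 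First I would observe that $\cL_1$ and $\cL_2$ are themselves characterizable by a single logical matrix: if $\LogMat_a$ characterizes $\cL_a$, then its reduct $\LogMat_a{\restriction}\Sigma_1$ to the signature $\Sigma_1$ characterizes the restriction of $\der_a$ to $L_{\Sigma_1}(P)$, and conservativity of $\cL_a$ over $\cL_1$ means precisely that this restriction is $\der_1$; symmetrically for $\cL_2$. Hence \Cref{bazooka} applies to both $\fib{\cL_a}{\cL_b}$ and $\fib{\cL_1}{\cL_2}$, and it suffices to show that, for every $\Gamma\cup\{C\}\subseteq L_{\Sigma_1\cup\Sigma_2}(P)$, the condition characterizing $\Gamma\der_{a\fibS b}C$ is equivalent to the one characterizing $\Gamma\der_{1\fibS 2}C$.

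The crucial auxiliary fact is that $\der_a$ and $\der_1$ (respectively $\der_b$ and $\der_2$) coincide on the common sublanguage: for $\Pi\cup\{D\}\subseteq L_{\Sigma_1\cup\Sigma_2}(P)$ one has $\Pi\der_a D$ iff $\Pi\der_1 D$. To see this I would apply \Cref{cor:seqmon} to the natural conservative extension of $\cL_a$ to the joint signature, getting $\Pi\der_a D$ iff $\skel_{\Sigma_a}(\Pi)\der_a\skel_{\Sigma_a}(D)$; since the formulas involved use only connectives from $\Sigma_1\cup\Sigma_2$ and the heads foreign to $\Sigma_a$ among these are exactly the $\Sigma_2$-heads, the maps $\skel_{\Sigma_a}$ and $\skel_{\Sigma_1}$ agree here, and the resulting skeletons are $\Sigma_1$-formulas over $P$ together with fresh skeletal variables. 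On $\Sigma_1$-formulas the relation $\der_a$ is given by $\LogMat_a{\restriction}\Sigma_1$, i.e.\ by $\der_1$ (matrix semantics being insensitive to the addition of the fresh skeletal variables), so $\skel_{\Sigma_1}(\Pi)\der_1\skel_{\Sigma_1}(D)$, which by \Cref{cor:seqmon} for $\cL_1$ is exactly $\Pi\der_1 D$. Two consequences follow at once: the saturations coincide, $\St_{a\fibS b}(\Gamma)=\St_{1\fibS 2}(\Gamma)$ (by induction on the saturation level, each level being computed by $\der_a/\der_b$ on subformulas of $\Gamma$), and the monolith sets coincide, $\Mmon_{\Sigma_a}(C)=\Mmon_{\Sigma_1}(C)$ and $\Mmon_{\Sigma_b}(C)=\Mmon_{\Sigma_2}(C)$. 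A similar skeleton-plus-conservativity argument shows that the common set $\St_{a\fibS b}(\Gamma)$ is $\der_b$-explosive iff it is $\der_2$-explosive.

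The one term in condition $(\mathbf{Z}^a)$ that does not immediately reduce is $\Mx^a_{a\fibS b}(\Gamma,C)$, since its definition quantifies over $\der_{a\fibS b}$ itself; this apparent circularity is the main obstacle. I would break it by a strong induction on $\size(C)$ in which the condition invoked is chosen according to the head of $C$: if $C\in P$ or $C$ is $\Sigma_a$-headed I use $(\mathbf{Z}^a)$, and if $C$ is $\Sigma_b$-headed I use the symmetric condition $(\mathbf{Z}^b)$. The point of this choice is that in each case the monoliths occurring in the relevant $\Mx$ term are \emph{proper} subformulas of $C$ --- empty for $C$ a variable, contained in the immediate subformulas when $C$ is $\Sigma_a$-headed (for $(\mathbf{Z}^a)$) or $\Sigma_b$-headed (for $(\mathbf{Z}^b)$) --- so the inductive hypothesis applies to each such monolith $C'$ to give $\Gamma\der_{a\fibS b}C'$ iff $\Gamma\der_{1\fibS 2}C'$, and hence $\Mx^a_{a\fibS b}(\Gamma,C)=\Mx^1_{1\fibS 2}(\Gamma,C)$ (respectively for $b$ and $2$). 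Had one tried to use $(\mathbf{Z}^a)$ for a $\Sigma_b$-headed $C$, the only $\Sigma_a$-monolith would be $C$ itself and the condition would collapse into a vacuous self-reference, which is exactly why the head-sensitive choice is indispensable. Combining the equalities of $\St$, of the monolith sets, and of the $\Mx$ terms, together with the coincidence of $\der_a/\der_1$ and $\der_b/\der_2$ on the sublanguage and the explosiveness equivalence, the chosen condition for $\fib{\cL_a}{\cL_b}$ becomes literally the corresponding condition for $\fib{\cL_1}{\cL_2}$, yielding $\Gamma\der_{a\fibS b}C$ iff $\Gamma\der_{1\fibS 2}C$ and thereby the claimed conservativity.
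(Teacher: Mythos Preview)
Your proof is correct and follows essentially the same strategy as the paper's: apply \Cref{bazooka} to both fibrings, show that the saturations and the explosiveness conditions coincide via conservativity, and then deal with the $\Mx$ term by induction on~$C$. Your head-sensitive alternation between $(\mathbf{Z}^a)$ and $(\mathbf{Z}^b)$ makes explicit a step the paper glosses over (it writes only the $(\mathbf{Z}^a)$ case, where the inductive hypothesis would not literally cover a $\Sigma_b$-headed~$C$), and your preliminary check that $\cL_1,\cL_2$ inherit single-matrix characterizability is a point the paper silently takes for granted.
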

\begin{proof}
{%
Let $\Sigma_{a}$, $\Sigma_{b}$,  $\Sigma_{1}$ and $\Sigma_{2}$,
 be the signatures of, respectively, $\cL_a$, $\cL_b$, $\cL_1$ and~$\cL_2$.
Fix $\Gamma\cup\{C\}\subseteq L_{\Sigma_1\cup \Sigma_2}(P)$.
From \Cref{bazooka} we may conclude that: (a)~$\Gamma\der_{\fib{a}{b}} C$
  if and only if
  either $\St_{\fib{a}{b}}(\Gamma),\Mx^{a}_{\fib{a}{b}}(\Gamma,C)\der_{a} C$, or $\St_{\fib{a}{b}}(\Gamma)$ is $\der_{b}$-explosive; 
  (b) $\Gamma\der_{\fib{1}{2}} C$
  if and only if
  either $\St_{\fib{1}{2}}(\Gamma),\Mx^1_{\fib{1}{2}}(\Gamma,C)\der_{1} C$, or $\St_{\fib{1}{2}}(\Gamma)$ is $\der_{2}$-explosive.
  Now, from the fact that 
   $\St_{\fib{a}{b}}^n(\Gamma)\cup\St_{\fib{1}{2}}^n(\Gamma)\subseteq L_{\Sigma_1\cup \Sigma_2}(P)$, for all $n\in \nats$, together with
    the assumptions that $\cL_a$ conservatively extends~$\cL_1$ and~$\cL_b$ 
conservatively extends~$\cL_2$ 
we conclude 
   that $\St_{\fib{a}{b}}(\Gamma)=\St_{\fib{1}{2}}(\Gamma)$.
The assumption about conservative extension also guarantees that
(c) $\St_{\fib{a}{b}}(\Gamma)$ is $\der_{b}$-explosive if and only if
$\St_{\fib{1}{2}}(\Gamma)$ is $\der_{2}$-explosive.

  We prove, by induction on the structure of~$C$, that (d) $\Gamma\der_{\fib{a}{b}} C$ if and only if  $\Gamma\der_{\fib{1}{2}} C$.
  If $C$ is a sentential variable then 
  $\Mx^{a}_{\fib{a}{b}}(\Gamma,C)\subseteq \Mmon_{\Sigma_{a}}(C)=\varnothing$
  and, also,
   $\Mx^{1}_{\fib{1}{2}}(\Gamma,C)\subseteq \Mmon_{\Sigma_{1}}(C)=\varnothing$.
   We note that (d) then follows from (a), (b) and (c).
  For the induction step, let $C$ be compound.
   From the inductive hypothesis 
   we conclude that $\Mx^{a}_{\fib{a}{b}}(\Gamma,C)=\Mx^{1}_{\fib{1}{2}}(\Gamma,C)$. 
   Hence, again from (a), (b) and (c), we note that (d) follows.%
}%
\qed\end{proof}

 \section{Merging fragments}\label{sec:merge}

\noindent
This section studies the expressivity of logics obtained by fibring disjoint fragments of classical logic.
We start by analyzing the cases in which combining disjoint 
sublogics
of classical logic still yields a 
sublogic 
of classical logic.

\begin{proposition}\label{lem:clonestop} 
{%
Let
 $\conn_1$ be a Boolean connective and $\conn_2$ be top-like. %
We then have that  
$\fib{\cB_{\conn_1}\!}{\cB_{\conn_2}}=\cB_{\conn_1\conn_2}$.
}%
\end{proposition}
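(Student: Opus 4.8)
The plan is to prove the two inclusions separately. Write $a=\conn_1$ and $b=\conn_2$, so that $\fib{\cB_{\conn_1}}{\cB_{\conn_2}}=\tuple{L_{a\fibS b}(P),\der_{a\fibS b}}$. The inclusion $\der_{a\fibS b}\;\subseteq\;\der_{\conn_1\conn_2}$ is immediate: the logic $\cB_{\conn_1\conn_2}$ is $\conn_1$- and $\conn_2$-classical and is characterized by the two-valued matrix whose restriction to each subsignature is exactly the single Boolean matrix characterizing the corresponding ingredient, so $\cB_{\conn_1\conn_2}$ extends both $\cB_{\conn_1}$ and $\cB_{\conn_2}$; since $\fib{\cB_{\conn_1}}{\cB_{\conn_2}}$ is by definition the smallest logic over the joint signature extending both, the inclusion follows. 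The real content is the converse $\der_{\conn_1\conn_2}\;\subseteq\;\der_{a\fibS b}$, which I will establish through the characterization of mixed reasoning provided by \Cref{bazooka}, reading condition $(\mathbf{Z}^a)$ with $a$ and $b$ fixed as above.

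First I record the semantic reading of the ingredient consequence relations on the mixed language. Since $\cB_{\conn_1}$ is characterized by the single Boolean matrix for $\conn_1$, \Cref{cor:seqmon} lets me read $\der_a$ over $L_{a\fibS b}(P)$ as evaluation in $\{0,1\}$ under the Boolean interpretation of $\conn_1$ acting on $\Sigma_a$-skeletons, where the variables of $P$ together with the skeletal variables $x_D$ (for $\conn_2$-headed $D$) range freely; dually, in $\der_b$ the connective $\conn_2$ is read as the constant $1$. The decisive leverage is top-likeness of $\conn_2$: from $\der\conn_2(p_1,\ldots,p_k)$ and substitution invariance, every $\conn_2$-headed formula is a theorem of $\cB_{\conn_2}$, hence of the fibring. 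Consequently $\Mmon_{\Sigma_a}(\Gamma)\subseteq\St_{a\fibS b}^{1}(\Gamma)\subseteq\St_{a\fibS b}(\Gamma)$, and also $\Mx^a_{a\fibS b}(\Gamma,C)=\Mmon_{\Sigma_a}(C)$, as each such monolith is $\conn_2$-headed and therefore derivable from $\Gamma$.

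Now fix $\Gamma\cup\{C\}\subseteq L_{a\fibS b}(P)$ with $\Gamma\der_{\conn_1\conn_2}C$; I establish the left disjunct of $(\mathbf{Z}^a)$, namely $\St_{a\fibS b}(\Gamma),\Mx^a_{a\fibS b}(\Gamma,C)\der_a C$. Let $v$ be any $\der_a$-valuation designating the $\Sigma_a$-skeletons of all these hypotheses. By the previous paragraph this \emph{forces} $v(x_D)=1$ for every $D\in\Mmon_{\Sigma_a}(\Gamma)\cup\Mmon_{\Sigma_a}(C)$. Define a $\cB_{\conn_1\conn_2}$-valuation $w$ by $w(p):=v(p)$ for $p\in P$. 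A routine induction on formula structure that halts at $\conn_2$-monoliths yields the transfer identity $v(\skel\nolimits_{\Sigma_a}(E))=w(E)$ whenever $v(x_D)=w(D)$ holds for every top-level $\conn_2$-monolith $D$ of $E$; since $w(D)=1$ by top-likeness of $\conn_2$ and $v(x_D)=1$ by the forcing above, this identity applies to each $\gamma\in\Gamma$ and to $C$. As $v$ designates $\skel\nolimits_{\Sigma_a}(\gamma)$ for every $\gamma\in\Gamma\subseteq\St_{a\fibS b}(\Gamma)$, the identity gives $w(\gamma)=1$, so $w$ satisfies $\Gamma$; from $\Gamma\der_{\conn_1\conn_2}C$ we obtain $w(C)=1$, and the identity once more gives $v(\skel\nolimits_{\Sigma_a}(C))=1$. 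Thus $v$ designates $\skel\nolimits_{\Sigma_a}(C)$, which is precisely the required $\der_a$-consecution, whence by \Cref{bazooka} $\Gamma\der_{a\fibS b}C$.

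The main obstacle, and the one point where top-likeness is indispensable, is the forcing step: I must guarantee that a $\der_a$-valuation satisfying all the hypotheses necessarily sends every relevant $\conn_2$-skeletal variable to $1$, so that it descends to a genuine $\cB_{\conn_1\conn_2}$-valuation instead of one treating $\conn_2$-monoliths as independent free atoms. This is exactly what the saturation (which absorbs the $\conn_2$-headed monoliths of $\Gamma$) and the equality $\Mx^a_{a\fibS b}(\Gamma,C)=\Mmon_{\Sigma_a}(C)$ deliver, and both rest on $\conn_2$ producing only theorems. Without top-likeness these monoliths would carry genuine semantic content, the valuation transfer would fail, and the fibring would not recover the mixed Boolean logic, exactly as witnessed by the conjunction--disjunction fibring of \Cref{fib-ex2}.
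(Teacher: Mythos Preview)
Your proof is correct and follows essentially the same route as the paper: both invoke \Cref{bazooka}, use top-likeness of $\conn_2$ to conclude that every $\conn_2$-headed monolith is a theorem (hence lies in the saturation and in $\Mx^a$), and then reduce the mixed consecution to a $\der_{\conn_1}$-consecution in which the $\conn_2$-monoliths appear as additional designated hypotheses. The only difference is presentational: the paper compresses the last step into the single line ``$\Gamma\der_{\conn_1\fibS\conn_2}\varphi$ iff $\Gamma,\Mmon_{\Sigma_{\conn_1}}(\Gamma\cup\{\varphi\})\der_{\conn_1}\varphi$ iff $\Gamma\der_{\conn_1\conn_2}\varphi$'', whereas you unpack this equivalence explicitly via the valuation-transfer argument between $v$ on skeletons and $w$ on the full Boolean matrix.
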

\begin{proof}
{By assumption, $\conn_2$ is top-like,
   hence: 
   ($\star$) for any given set of formulas~$\Delta$, we have $\Delta\der_{\conn_2} \psi$ iff $\psi\in \Delta$ or $\head(\psi)=\conn_2$.
  Let us prove that $\Gamma\der_{\conn_1\conn_2}\varphi$ iff $\Gamma\der_{\conn_1\fibS \conn_2}\varphi$. 
  By \Cref{bazooka}, we know that $\Gamma\der_{\conn_1\fibS \conn_2}\varphi$ iff 
  $\St_{\conn_1\fibS \conn_2}(\Gamma),\Mx^{\conn_1}_{\conn_1\fibS \conn_2}(\Gamma,\varphi)\der_{\conn_1}\varphi$ or $\St_{\conn_1\fibS \conn_2}(\Gamma)$ is $\der_{\conn_2}$-explosive.
  By ($\star$) it follows that if $\St_{\conn_1\fibS \conn_2}(\Gamma)$ is $\der_{\conn_2}$-explosive then $\St_{\conn_1\fibS \conn_2}(\Gamma)$ 
  must contain all the sentential variables and $\{\conn_1\}$-headed formulas. 
   Furthermore, 
 $\Mmon_{\conn_2}(\sub(\Gamma))\subseteq \St_{\conn_1\fibS \conn_2}(\Gamma)$ and
$\Mx^{\conn_1}_{\conn_1\fibS \conn_2}(\Gamma,\varphi)=\Mmon_{\conn_2}(\varphi)$.
  Therefore, $\Gamma\der_{\conn_1\fibS \conn_2}\varphi$ iff 
  $\St_{\conn_1\fibS \conn_2}(\Gamma),\Mx^{\conn_1}_{\conn_1\fibS \conn_2}(\Gamma,\varphi)\der_{\conn_1}\varphi$.
Moreover, $\St_{\conn_1\fibS \conn_2}(\Gamma)=
\{\psi\in \sub(\Gamma):\Gamma,\Mmon_{\conn_1}(\Gamma)\der_{\conn_1}\psi\}$.
We may then finally conclude that $\Gamma\der_{\conn_1\fibS \conn_2}\varphi$ iff $\Gamma,\Mmon_{\conn_1}(\Gamma\cup \{\varphi\})\der_{\conn_1}\varphi$
iff $\Gamma\der_{\conn_1\conn_2}\varphi$.
}%
\qed\end{proof}

\begin{example}
$\fib{\cB_{\,\coimplsub}}{\cB_{\top}}=\cB_{\,\coimplsub \top}$ yields full classical logic, as the set $\{\coimpl,\top\}$ is functionally complete.
\hfill$\triangle$
\end{example}


\begin{proposition}\label{prop:projbot} 
Let $\conn_1$ and $\conn_2$ be Boolean connectives neither of which are very significant. 
 Then, 
$\fib{\cB_{\conn_1\!}}{\cB_{\conn_2}}=\cB_{\conn_1\conn_2}$.
\end{proposition}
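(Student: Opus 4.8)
The plan is to mimic the structure of the proof of \Cref{lem:clonestop}, exploiting \Cref{bazooka} together with the hypothesis that neither $\conn_1$ nor $\conn_2$ is very significant. Recall that being not very significant means being either bottom-like or a projection-conjunction. First I would dispose of the inclusion $\fib{\cB_{\conn_1}}{\cB_{\conn_2}}\subseteq\cB_{\conn_1\conn_2}$, which is immediate: since $\cB_{\conn_1\conn_2}$ is both $\conn_1$-classical and $\conn_2$-classical, its consequence relation extends both $\der_{\conn_1}$ and $\der_{\conn_2}$, and hence by minimality of fibring it extends $\der_{\conn_1\fibS\conn_2}$. The substance of the proposition is the reverse inclusion, $\der_{\conn_1\conn_2}\;\subseteq\;\der_{\conn_1\fibS\conn_2}$, and this is where \Cref{bazooka} does the work.

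The key technical point I would establish is a normal-form statement for formulas built from a connective that is not very significant: if $\conn$ is bottom-like or a projection-conjunction, then for any compound $\conn$-headed formula $C=\conn(C_1,\ldots,C_k)$ one has, according to $\der_\conn$, that $C$ is logically equivalent either to $\bot$-like explosive behaviour or to a conjunction of some subset of its immediate arguments $\{C_j:j\in J\}$. Concretely, when $\conn$ is a projection-conjunction with index set $J$ we have $\EqDiv{\conn}{C}{\{C_j:j\in J\}}$, and when $\conn$ is bottom-like we have that $C$ is $\der_\conn$-explosive. I would then argue, as in \Cref{lem:clonestop}, that this rigid behaviour forces the saturation $\St_{\conn_1\fibS\conn_2}(\Gamma)$ to coincide with what one obtains reasoning entirely inside $\cB_{\conn_1\conn_2}$: every monolith that becomes derivable does so only by collapsing to one of its own subformulas (projection-conjunction case) or by triggering explosion (bottom-like case), and in either situation the mixed derivation introduces no genuinely new interaction. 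This lets me show that condition $(\mathbf{Z}^{\conn_1})$ reduces to a plain $\der_{\conn_1\conn_2}$ derivability check.

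The main obstacle, and the step requiring the most care, is the interaction between the two cases: unlike \Cref{lem:clonestop}, where $\conn_2$ was uniformly top-like, here each connective may independently be bottom-like or a projection-conjunction, giving four sub-cases to reconcile. The delicate sub-case is when one connective is bottom-like, because then $\St_{\conn_1\fibS\conn_2}(\Gamma)$ may become $\der_{\conn_2}$-explosive (or $\der_{\conn_1}$-explosive) through a monolith, and I must verify that the explosion available in the fibring is exactly matched by explosion in $\cB_{\conn_1\conn_2}$. For this I would check that $\St_{\conn_1\fibS\conn_2}(\Gamma)$ is $\der_{\conn_2}$-explosive if and only if $\Gamma\der_{\conn_1\conn_2}\psi$ for every $\psi$, using the fact that a bottom-like connective applied to derivable-from-$\Gamma$ arguments yields a formula from which everything follows in the combined classical logic as well. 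The projection-conjunction/projection-conjunction sub-case should run closely parallel to \Cref{lem:clonestop}, since there the saturation simply accumulates the projective components and no explosion arises.

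Finally I would assemble the pieces: granting the normal-form lemma, I would show $\Mx^{\conn_1}_{\conn_1\fibS\conn_2}(\Gamma,\varphi)$ and $\St_{\conn_1\fibS\conn_2}(\Gamma)$ are determined by the behaviour of $\conn_2$-headed monoliths under $\der_{\conn_2}$, conclude via $(\mathbf{Z}^{\conn_1})$ that $\Gamma\der_{\conn_1\fibS\conn_2}\varphi$ iff $\Gamma,\Mmon_{\conn_2}(\Gamma\cup\{\varphi\})\der_{\conn_1}\varphi$ (or the symmetric explosion condition holds), and then verify that the right-hand side is equivalent to $\Gamma\der_{\conn_1\conn_2}\varphi$ by re-expanding the collapsed monoliths inside the full calculus. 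The whole argument hinges on the observation that connectives which are not very significant carry no inferential content beyond projection, conjunction, or explosion, so fibring them cannot manufacture the absorption-style failures witnessed in \Cref{fib-ex2}.
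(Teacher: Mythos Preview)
Your approach differs substantially from the paper's, which does \emph{not} invoke \Cref{bazooka} at all. Instead, the paper argues directly via a normal-form reduction: it splits into three cases (both projection-conjunctions; both bottom-like; one of each) and in each case exhibits, for every mixed formula~$\psi$, a set of ``atoms'' (sentential variables, or formulas headed by the bottom-like connective) that is logically equivalent to~$\psi$ simultaneously in $\fib{\cB_{\conn_1}}{\cB_{\conn_2}}$ and in $\cB_{\conn_1\conn_2}$. Once every formula is reduced to such a set, the two consequence relations agree trivially on these atomic sets, and the result follows. This is considerably lighter than threading the argument through saturations and monoliths.

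Your route via \Cref{bazooka} is not wrong in spirit, and the ``normal-form statement'' you identify is exactly the lever the paper pulls --- you simply plan to use it inside the $(\mathbf{Z}^{\conn_1})$ condition rather than directly. However, your final assembly step is where the plan becomes shaky: the conclusion ``$\Gamma\der_{\conn_1\fibS\conn_2}\varphi$ iff $\Gamma,\Mmon_{\conn_2}(\Gamma\cup\{\varphi\})\der_{\conn_1}\varphi$'' mimics the top-like case of \Cref{lem:clonestop} too closely. There, \emph{every} $\conn_2$-headed monolith is derivable, so one may freely add all of $\Mmon_{\conn_1}(\Gamma\cup\{\varphi\})$; here, when $\conn_2$ is a projection-conjunction, a $\conn_2$-headed monolith is derivable only when its projective components are, so $\Mx^{\conn_1}_{\conn_1\fibS\conn_2}(\Gamma,\varphi)$ is genuinely a proper subset of the monoliths and depends on~$\Gamma$. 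Your outline does not explain how this dependence is discharged. The paper sidesteps the issue entirely by never computing $\St$ or $\Mx$: once both sides of the equivalence are reduced to the same set of atoms, there is nothing left to compare. If you keep your approach, you will end up re-deriving that reduction anyway, so it is cleaner to use it from the start.
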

\begin{proof}
There are three possible combinations, either \textbf{(a)} both connectives are conjunction-projections, or \textbf{(b)} both are bottom-like, or \textbf{(c)} one connective is bottom-like and the other is a~conjunction-projection.

\noindent
  [Case \textbf{(a)}] Let $J_1$ and $J_2$ be the sets of indices corresponding respectively to the projective components of~$\conn_1$ and of~$\conn_2$.
 For each $\psi\in L_{\conn_1\conn_2}(P)$ let us define 
  $P_\psi\subseteq P$ recursively, in the following way: $P_\psi:=\{\psi\}$ if $\psi\in P$ and
   $P_{\conn_i(\psi_1,\ldots,\psi_k)}:=\bigcup_{a\in J_i}P_{\psi_a}$ for $i\in \{1,2\}$.
We claim that~$\psi$ is equivalent to~$P_\psi$ both according to $\fib{\cB_{\conn_1\!}}{\cB_{\conn_2}}$ and according to $\cB_{\conn_1\conn_2}$. 
 Let us prove this by induction on the structure of~$\psi$.
 For the base case, let~$\psi$ be a sentential variable, and note that~$\psi$ is equivalent to itself.
 If~$\psi$ is a nullary connective~$\conn_i$, for some $i\in\{1,2\}$ (and therefore $\conn_i$ is top-like), then $\conn_i$ is equivalent to $P_{\conn_i}$ (namely, the empty set).
 For the inductive step, consider $\psi=\conn_i(\psi_1,\ldots,\psi_{k_i})$ where $k_i$ is the arity of $\conn_i$. 
 Using the fact that $\conn_i$ is a projection-conjunction we have that $\conn_i$ is equivalent to $\{\psi_a:a\in J_i\}$.
 By induction hypothesis, each $\psi_a$ is equivalent to $P_{\psi_a}$, hence~
 $\psi$ is equivalent to $\bigcup_{a\in J_i}P_{\psi_a}$.
Finally, for a set of sentential variables $B\cup \{b\}$ we clearly have that
  $B\der_{\conn_1\fibS \conn_2}b$ iff $B\der_{\conn_1\conn_2}b$ iff $b\in B$. 
  So, the logics are equal.%
  \smallskip
 
 

\noindent
  [Case \textbf{(b)}] This is similar to the previous case. Let $\psi\in L_{\conn_1\conn_2}(P)$. We now define $A_\psi$ recursively in the following way: $A_\psi:=\{\psi\}$ if $\psi\in P$ or $\head(\psi)=\conn_2$, and
   $A_{\conn_1(\psi_1,\ldots,\psi_k)}:=\bigcup_{a\in J_1}A_{\psi_a}$.
Again, it is not hard to check that in both $\fib{\cB_{\conn_1\!}}{\cB_{\conn_2}}$ and $\cB_{\conn_1\conn_2}$ we have that 
 $\psi$ is equivalent to $A_\psi$. 
Moreover, given $B\cup \{b\}\subseteq P\cup \{\psi: \head(\psi)=\conn_2\}$ we clearly have that
  $B\der_{\conn_1\fibS \conn_2}b$ iff $B\der_{\conn_1\conn_2}b$ iff $b\in B$ or there is $\psi\in B$ such that $\head(\psi)=\conn_2$.
  \smallskip
  
\noindent
  [Case \textbf{(c)}] It should be clear that
    according to both $\fib{\cB_{\conn_1\!}}{\cB_{\conn_2}}$ and $\cB_{\conn_1\conn_2}$ we may conclude that~$\varphi$ follows from~$\Gamma$ iff either $\varphi\in \Gamma$ or there is $\psi\in \Gamma$ such that $\psi\notin P$.  
\qed\end{proof}

{
\begin{proposition}\label{lem:cloneseq} 
For any set of Boolean connectives
 $\conw\subseteq\cC_{\mathsf{BA}}^{\eq} $, we have that $\fib{\cB_{\conw}}{\cB_{\bbot}}= \cB_{\conw\cup\{\bot\}}$.
\end{proposition}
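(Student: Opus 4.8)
The plan is to prove the two inclusions $\der_{\conw\fibS\bot}\;\subseteq\;\der_{\conw\cup\{\bot\}}$ and $\der_{\conw\cup\{\bot\}}\;\subseteq\;\der_{\conw\fibS\bot}$ separately. The first is immediate from the minimality built into fibring: since $\cB_{\conw\cup\{\bot\}}$ is at once $\conn$-classical for every $\conn\in\conw$ and $\bot$-classical, it extends both $\cB_{\conw}$ and $\cB_{\bot}$, and $\fib{\cB_{\conw}}{\cB_{\bot}}$ is by definition the least logic doing so. For the converse I would invoke \Cref{bazooka} with $a=\conw$ and $b=\bot$, exploiting two simplifications that follow from $\bot$ being a nullary, bottom-like connective: every $\conw$-monolith is just an occurrence of $\bot$, so $\Mmon_{\conw}(C)\subseteq\{\bot\}$ and $\skel_{\conw}$ merely renames each $\bot$ into a single fresh variable $x_\bot$; and a saturated set is $\der_\bot$-explosive exactly when it contains $\bot$.

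With these in hand I would first pin down the saturation. Writing $\St(\Gamma)$ for $\St_{\conw\fibS\bot}(\Gamma)$, the $\der_\bot$-clause adds nothing new until $\bot$ is reached, so as long as $\bot\notin\St(\Gamma)$ the saturation is driven entirely by $\der_\conw$ on skeletons, and \Cref{cor:seqmon} gives that $\St(\Gamma)$ and $\Gamma$ share the same $\der_\conw$-skeletal closure. Using that the connectives of $\conw$ are $1$-preserving (being in $\cC_{\mathsf{BA}}^\eq$), a fresh variable is $\der_\conw$-derivable from a set only if it already occurs in it; together with monotonicity and cut this yields the clean equivalence $\bot\in\St(\Gamma)$ iff $\skel_{\conw}(\Gamma)\der_\conw x_\bot$. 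The argument then splits in two. If $\skel_{\conw}(\Gamma)\der_\conw x_\bot$, then $\bot\in\St(\Gamma)$, condition $(\mathbf{Z}^\conw)$ fires through explosion and $\Gamma\der_{\conw\fibS\bot}C$ for every $C$; semantically, $x_\bot$ being forced true means $\Gamma$ has no $\bot$-model, so $\Gamma\der_{\conw\cup\{\bot\}}C$ as well, and the two logics agree. Otherwise $\bot\notin\St(\Gamma)$, one checks that $\Gamma\not\der_{\conw\fibS\bot}\bot$ and hence $\Mx^\conw_{\conw\fibS\bot}(\Gamma,C)=\varnothing$, so $(\mathbf{Z}^\conw)$ collapses to $\St(\Gamma)\der_\conw C$, i.e.\ to $\skel_{\conw}(\Gamma)\der_\conw\skel_{\conw}(C)$ with $x_\bot$ ranging freely.

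It remains to match this last condition with $\Gamma\der_{\conw\cup\{\bot\}}C$, which is the same skeletal entailment but evaluated under the constraint $x_\bot=0$. One implication is trivial by specialization; the heart of the proof, and the step I expect to be the main obstacle, is the reverse: when $\skel_{\conw}(\Gamma)$ is satisfiable with $x_\bot=0$ (exactly the surviving case), entailment of $\skel_{\conw}(C)$ under $x_\bot=0$ already forces unconstrained entailment. I would isolate this as a small linear-algebra lemma over $\mathbb{F}_2$: the models of $\skel_{\conw}(\Gamma)$ form an affine subspace $V$ containing the all-$1$ valuation $\mathbf{1}$ (by $1$-preservation), and $\skel_{\conw}(C)$ is an affine, $1$-preserving function, so $\mathbf{1}\in V$ satisfies it. If $\skel_{\conw}(C)$ were not constantly $1$ on $V$, its $1$-set would be a codimension-one coset of $V$, necessarily equal to $V\cap\{x_\bot=0\}$, which excludes $\mathbf{1}$ and contradicts $1$-preservation. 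Hence $\skel_{\conw}(C)$ is identically $1$ on $V$, giving unconstrained entailment. Here $1$-preservation, the defining feature of $\cC_{\mathsf{BA}}^\eq$, is precisely what rules out the asymmetric ``$x_\bot=0$'' behaviour responsible for non-conservativity in other fragments (compare \Cref{fib-ex2}).

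Combining the two cases shows $\Gamma\der_{\conw\fibS\bot}C$ iff $\Gamma\der_{\conw\cup\{\bot\}}C$, which together with the easy inclusion yields $\fib{\cB_{\conw}}{\cB_{\bot}}=\cB_{\conw\cup\{\bot\}}$. As a sanity check, the instance $\conw=\{\eq\}$ gives $\fib{\cB_\eq}{\cB_\bot}=\cB_{\eq\bot}$, the full affine fragment, since $\{\eq,\bot\}$ generates the clone $\mathbb{A}$.
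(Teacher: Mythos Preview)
Your argument is correct and takes a genuinely different route from the paper's. The paper first treats the special case $\conw=\{\eq\}$ by a syntactic argument: it observes that $\Gamma\der_{\eq\fibS\bot}C$ iff $\Gamma\der_\eq\bot$ or $\Gamma\der_\eq C$, and then uses a deduction-theorem-like property of~$\eq$ (from Humberstone) together with the Lindenbaum--Asser construction to produce a maximal $\der_\eq$-theory extending $\Gamma\cup\{A\eq\bot\}$, relative to~$A$, witnessing $\Gamma\not\der_{\eq\bot}A$ whenever $\Gamma\not\der_{\eq\fibS\bot}A$. The general case $\conw\subseteq\cC_{\mathsf{BA}}^\eq$ is then deduced via \Cref{prop:conslifts}. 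By contrast, you bypass both the specific syntactic facts about~$\eq$ and the conservativity-lifting step: after essentially the same reduction via \Cref{bazooka}, you settle the key implication by a direct linear-algebra argument over $\mathbb{F}_2$ that simultaneously exploits both defining properties of $\cC_{\mathsf{BA}}^\eq$ --- affinity, for the affine-subspace structure of the model set, and $1$-preservation, for the contradiction with the all-ones valuation. Your approach is more uniform and makes transparent the semantic reason the result holds for every $\conw\subseteq\cC_{\mathsf{BA}}^\eq$ at once; the paper's is more modular, reusing \Cref{prop:conslifts}, but hinges on a specific deductive fact about~$\eq$ imported from the literature.
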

\begin{proof}
We first show that $\fib{\cB_{\eq}}{\cB_{\bbot}}=\cB_{\eq\bot}$.
As $\cB_{\bot}$ is axiomatized by just the single rule $\frac{\bot}{p}$, it
 easily follows that (a) $\Gamma \der_{\eq\bullet \bot} C$ iff $\Gamma \der_{\eq} \bot$ or $\Gamma \der_{\eq} C$.
By \cite[Exercise 7.31.3(iii)]{Humberstone2011-HUMTC}, we note that (b) for every $\Gamma\cup\{B,C\}\subseteq L_{\eq}(P)$ we have that $\Gamma,B\der_{\eq} C$   iff
  $\Gamma\der_{\eq} C$ or $\Gamma\der_{\eq} B \eq C$. Note in addition that (c) $ \der_{\eq} B\eq ((B \eq A) \eq A)$.
Now, if $\Gamma \not\der_{\eq\bullet \bot} A$ then by (a) we have that
$\Gamma \not\der_{\eq} A$ and $\Gamma \not\der_{\eq} \bot$.
Further, using (b) and (c), it follows also that $\Gamma,A\eq\bot \not\der_{\eq} A$ and $\Gamma,A\eq\bot  \not\der_{\eq} \bot$.
Now, a straightforward use of the Lindenbaum-Asser lemma shows that there exists a $\der_{\eq}$-theory $T$ extending $\Gamma\cup\{A\eq\bot\}$ which is maximal relative to $A$.
Obviously $\bot\notin T$, and $\fib{\cB_{\eq}}{\cB_{\bbot}}=\cB_{\eq\bot}$ then follows from the completeness of the axiomatization of $\cB_{\eq}$.
%
%
%
From this, given $\conw\subseteq\cC_{\mathsf{BA}}^{\eq}$, 
we conclude with the help of \Cref{prop:conslifts} that $\fib{\cB_{\conw}}{\cB_{\bbot}}=\cB_{\conw\bot}$.
%
    \qed
 \end{proof}
}

\begin{example}
{%
For every connective~$\conn$ expressed by the logic of classical bi-impli\-ca\-tion, e.g. $\conn\in \{\eq,\lambda pqr.p+q+r\}$,  we have that $\fib{\cB_{\conn}}{\cB_{\bbot}}=\cB_{\conn\bot}$.
}%
\hfill$\triangle$
\end{example}


{%
We now analyze the cases in which combining disjoint sublogics of classical logic results in a logic strictly weaker than the logic of the corresponding classical mixed language.
}

\begin{remark}\label{theconnectives}
{%
A detailed analysis of Post's lattice tells us 
}%
that every clone $\cC_{\mathsf{BA}}^\Sigma$ that contains the Boolean function of a very significant connective (i.e., $\cC_{\mathsf{BA}}^\Sigma\not\subseteq \cC_{\mathsf{BA}}^{\land\top\bot}$) must contain the Boolean function associated to at least one of the following connectives: $\neg$, $\imp$, $\eq$, $\coimpl$, $+$, $\textsc{if}$, $T^{n+1}_n$ (for $n\in\nats$), $T^{n+1}_2$ (for $n\in\nats$), $\lambda pqr.p\lor(q\land r)$, $\lambda pqr.p\lor(q+r)$, $\lambda pqr.p\land(q\lor r)$, $\lambda pqr.p\land(q\imp r)$, $\lambda pqr.p+q+r$.
\hfill$\triangle$
\end{remark}

\begin{lemma}\label{lem:clonesbot} 
 Let $\conw$ be a family of Boolean connectives, 
 and assume that $\cB_{\conw}$ expresses at least one among the  
connectives in \Cref{theconnectives}, distinct from~$\eq$ and~$\lambda p q r. p+q+r$.
 Then $\fib{\cB_{\conw}}{\cB_{\bbot}}\subsetneq \cB_{\conw\cup\{\bot\}}$.
\end{lemma}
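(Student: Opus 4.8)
The plan is to establish the two inclusions separately. The inclusion $\fib{\cB_{\conw}}{\cB_{\bot}}\subseteq\cB_{\conw\cup\{\bot\}}$ is immediate: since $\cB_{\conw\cup\{\bot\}}$ is $\conn$-classical for every $\conn\in\conw\cup\{\bot\}$ it extends both $\cB_{\conw}$ and $\cB_{\bot}$, whereas the fibring is by construction the least common extension. For the strict inclusion it then suffices to exhibit a single consecution sanctioned by $\cB_{\conw\cup\{\bot\}}$ but refused by $\fib{\cB_{\conw}}{\cB_{\bot}}$, the refusal being certified through \Cref{bazooka}.

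First I would fix the connective $\conn$ of \Cref{theconnectives} (distinct from $\eq$ and $\lambda pqr.p+q+r$) that $\cB_{\conw}$ expresses, say via a term $\varphi_{\conn}$ over $\conw$, and feed $\bot$ into selected arguments. Since $\bot$ is read as~$0$, in each case the specialization collapses two-valuedly to a projection-conjunction: either (i)~to a theorem, as in $\der_{\conw\cup\{\bot\}}\neg\bot$ and $\der_{\conw\cup\{\bot\}}\bot\imp p$; or (ii)~to an equivalence (taken in $\cB_{\conw\cup\{\bot\}}$) with a variable or a conjunction of variables, as in $\coimpl(p,\bot)\dashv\vdash p$, $p+\bot\dashv\vdash p$, $\ifelse(\bot,q,r)\dashv\vdash r$, $\bot\lor(q+\bot)\dashv\vdash q$, $p\lor(\bot\land r)\dashv\vdash p$, $p\land(\bot\imp r)\dashv\vdash p$, $p\land(\bot\lor r)\dashv\vdash p\land r$, $T^{n+1}_n(p_1,\dots,p_n,\bot)\dashv\vdash p_1\land\dots\land p_n$, and $T^{n+1}_2(p_1,p_2,\bot,\dots,\bot)\dashv\vdash p_1\land p_2$. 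The witnessing consecution is the theorem $\der C$ in case~(i), and in case~(ii) a consecution expressing the collapse --- a projective entailment $C\der p_j$, a building-up $p_j\der C$, or, for conjunction targets, one of the projective halves --- chosen so that it ceases to hold once $\bot$ is read as a fresh variable $x_{\bot}$ ranging over $\{0,1\}$. Such a choice always exists, because each collapse genuinely exploits $\bot=0$: the skeleton of $\varphi_{\conn}(\dots\bot\dots)$ really depends on $x_{\bot}$, hence is not identically the target, so at least one of the two entailment directions breaks.

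Next I would feed the chosen consecution into \Cref{bazooka}. From the standpoint of $\cB_{\conw}$ the occurrences of $\bot$ are opaque monoliths, so $\Gamma\der_{\conw\fibS\bot}C$ amounts to the disjunction $(\mathbf{Z}^{\conw})$: either $\St_{\conw\fibS\bot}(\Gamma),\Mx^{\conw}_{\conw\fibS\bot}(\Gamma,C)\der_{\conw}C$, or $\St_{\conw\fibS\bot}(\Gamma)$ is $\der_{\bot}$-explosive. In each case the monolith set $\Mx^{\conw}_{\conw\fibS\bot}(\Gamma,C)$ is empty --- either because $C$ is variable-headed, or because $\bot$ is not $\der_{\conw\fibS\bot}$-derivable from $\Gamma$ --- and the saturation adds nothing useful. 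Since $\cB_{\conw}$ is characterised by the two-valued matrix, $\der_{\conw}$ is exactly two-valued entailment with $x_{\bot}$ free; by the choice of the witnessing half this entailment fails, so the first disjunct of $(\mathbf{Z}^{\conw})$ is not met.

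The main obstacle is the second disjunct: I must rule out that $\St_{\conw\fibS\bot}(\Gamma)$ becomes $\der_{\bot}$-explosive, since the moment $\bot\in\St_{\conw\fibS\bot}(\Gamma)$ the rule $\frac{\bot}{p}$ fires and the fibring matches the mixed logic. Here the premise is a single $\conn$-headed formula (or a variable), and $\bot$ can enter the saturation only if that formula $\der_{\conw}$-entails $\bot$, i.e.\ if its skeleton two-valuedly forces $x_{\bot}=1$; in every listed case there is a valuation with $x_{\bot}=0$ satisfying the premise, so no such forcing occurs and the saturation stays inert. This is precisely where excluding $\eq$ and $\lambda pqr.p+q+r$ matters: feeding $\bot$ to those connectives yields $\neg$, an involution that is neither a projection nor a constant, whose interaction with $\bot$ is already internalised by the fibring --- in accordance with the equality established in \Cref{lem:cloneseq}. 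Assembling these observations one gets $\Gamma\der_{\conw\cup\{\bot\}}C$ while $\Gamma\not\der_{\conw\fibS\bot}C$, yielding the strict inclusion.
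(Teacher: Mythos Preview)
Your approach coincides with the paper's: plug $\bot$ into the distinguished connective, obtain a consecution valid in $\cB_{\conw\cup\{\bot\}}$, and refute it in the fibring via \Cref{bazooka}. The case list is essentially the same. The one substantive difference is that the paper first proves the strict inclusion for the \emph{single} connective $\conn$ and only then lifts to an arbitrary $\conw$ expressing $\conn$ by invoking \Cref{prop:conslifts}: since $\cB_{\conw\cup\{\conn\}}$ conservatively extends both $\cB_{\conn}$ and $\cB_{\conw}$, the counterexample found in $L_{\conn\bot}$ survives in $\fib{\cB_{\conw\cup\{\conn\}}}{\cB_{\bot}}$, and its translation via the defining term $\varphi_\conn$ then lives in $L_{\conw\bot}$. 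This buys a much lighter verification of the bazooka conditions, because $\conn(\ldots,\bot,\ldots)$ has only a handful of subformulas and the saturation is transparently computed.

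By contrast, you work directly with $\varphi_\conn(\ldots,\bot,\ldots)$ over $\conw$, and here your claim that ``the saturation stays inert'' is not justified and can in fact be false: if, say, $\varphi_{\ifelse}(p,q,r)=(p\imp q)\land(\neg p\imp r)$, then from $\varphi_{\ifelse}(\bot,q,r)$ the $\der_\conw$-step immediately adds the conjuncts $\bot\imp q$ and $\neg\bot\imp r$ to $\St$. What you actually need is weaker and does hold: (i) $\bot$ never enters $\St$, and (ii) your countervaluation $v$ with $v(x_\bot)=1$ refuting $\skel_\conw(\Gamma)\der_\conw\skel_\conw(C)$ continues to satisfy $\skel_\conw(\St^n)$ for every $n$. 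Both follow by a short simultaneous induction once you also fix a valuation $v_0$ with $v_0(x_\bot)=0$ satisfying $\skel_\conw(\Gamma)$ (which exists because the premise is Booleanly satisfiable): any $D$ entering $\St^{n+1}$ does so via $\der_\conw$, hence $\skel_\conw(\St^n)\der_\conw\skel_\conw(D)$ and both $v_0,v$ satisfy $\skel_\conw(D)$; the $\der_\bot$-step adds nothing new as long as $\bot\notin\St^n$, and $v_0$ witnesses that it is not. With this filled in your direct argument goes through; the paper's route via \Cref{prop:conslifts} simply sidesteps the issue.
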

\begin{proof}


Let $\conn$ be one of the above 
Boolean connectives.
We show that there are $\Gamma\cup\{C\}\subseteq L_{\conn}(P)$ and $\sigma:P\longrightarrow P\cup \{\top\}$ such that
$\Gamma^\sigma\der_{\conn\bot} C^\sigma$ yet $\Gamma^\sigma\not\der_{\conn\fibS\bot} C^\sigma$,
thus concluding that $\fib{\cB_{\conn}}{\cB_{\bbot}}\subsetneq \cB_{\conn\bot}$.
Hence, by applying \Cref{prop:conslifts}, we obtain that  $\fib{\cB_{\conw}}{\cB_{\bbot}}\subsetneq \cB_{\conw\cup\{\bot\}}$ for $\conw$ in the conditions of the statement.

  
We will explain two cases in detail, and for the remaining cases we just present the relevant formulas~$\Gamma^\sigma$ and~$C^\sigma$, as the rest of the reasoning is analogous.\\[.2mm]
{[Case $\conn=\neg$]} Set $\Gamma:=\varnothing$ and $C^\sigma:=\neg \bot$.
 We have that 
$\der_{\conn\bot}\neg \bot$.
However, since $\not\der_{\conn}\neg(x_\bot)$ and $\St_{\fib{\conn}{\bot}}(\Gamma)=\varnothing$ is not $\der_{\bot}$-explosive,
we conclude that $\not\der_{\fib{\conn}{\bot}}\neg(\bot)$ by \Cref{bazooka}.\\[.2mm]
{[Case $\conn=\ou$]} Set $\Gamma^\sigma:=\{\bot \ou q\}$ and $C^\sigma:=q$.
We have that $\bot \ou q\der_{\conn\bot} q$.
However, since $x_\bot \ou q \not \der_{\conn} q$ and 
$\St_{\fib{\conn}{\bot}}(\{\varphi(x_\bot,q)\})=\{\varphi(x_\bot,q)\}$ is not $\der_{\bot}$-explosive,
we conclude that $\bot \ou q\not\der_{\fib{\conn}{\bot}}q$ by \Cref{bazooka}.\\[.2mm]
%
%
{[Case $\conn=+$]} Set $\Gamma^\sigma:=\{\bot + q\}$ and $C^\sigma:=q$.\\[.2mm]
%
{[Case $\conn=\imp$]} Set $\Gamma^\sigma:=\varnothing$ and $C^\sigma:=\bot \imp q$. \\[.2mm]
%
{[Case $\conn=\ \coimpl$]} let $\Gamma^\sigma:=\{p\}$ and $C^\sigma:=p\coimpl \bot$. \\[.2mm]
%
{[Case $\conn=\lambda p q r. p{\ou} (q {+} r)$}] Set $\Gamma^\sigma:=\{\bot{\ou} (q {+} \bot)\}$ and $C^\sigma:=q$. \\[.2mm]
%
{[Case $\conn=\lambda p q r. p{\e} (q {\imp} r)$]} Set $\Gamma^\sigma:=\{p\}$ and $C^\sigma:=p{\e} (\bot {\imp} r)$. \\[.2mm]
%
{[Case $\conn=\lambda p q r. p{\e} (q {\ou} r)$]} Set $\Gamma^\sigma:=\{p{\e} (\bot {\ou} r)\}$ and $C^\sigma:=r$. \\[.2mm]
%
{[Case $\conn=\lambda p q r. p{\ou} (q{\e} r)$]} Set $\Gamma^\sigma:=\{\bot{\ou} (q {\e} r)\}$ and $C^\sigma:=q$. \\[.2mm]
{[Case $\conn=\ifelse$]} Set $\Gamma^\sigma:=\{\ifelse(\bot,q,r)\}$ and $C^\sigma:=r$. \\[.4mm]
%
{[Case $\conn=T_k^{k+1}$]} Set $\Gamma^\sigma:=\{T_k^{k+1}(p,\ldots,p,q,\bot)\}$ and $C^\sigma:=q$. \\[.2mm]
%
{[Case $\conn=T_2^{k+1}$]} Set $\Gamma^\sigma:=\{T_2^{k+1}(p,p,\bot,\ldots,\bot)\}$ and $C^\sigma:=p$. 
\qed
%
%
%
%
%
\end{proof}

{ 
\begin{corollary}\label{pro:connbot} 
 Let 
 $\conn\notin \cC_{\mathsf{BA}}^{\eq}$ be some very significant Boolean connective.
 %
Then, $\fib{\cB_{\conn}}{\cB_{\bbot}}\subsetneq \cB_{\conn\bot}$.
\end{corollary}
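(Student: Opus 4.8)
The plan is to reduce the statement to \Cref{lem:clonesbot}. Concretely, I would show that the singleton family $\conw=\{\conn\}$ meets the hypothesis of that lemma, i.e.\ that $\cB_{\conn}$ expresses at least one of the connectives listed in \Cref{theconnectives} \emph{other than} $\eq$ and $\lambda pqr.p{+}q{+}r$; then \Cref{lem:clonesbot} immediately delivers $\fib{\cB_{\conn}}{\cB_{\bbot}}\subsetneq\cB_{\conn\bot}$, which is exactly the desired conclusion. So the whole proof boils down to locating a suitable ``forbidden'' connective inside the clone $\cC_{\mathsf{BA}}^{\conn}$ generated by~$\conn$. (Note that the complementary case $\conn\in\cC_{\mathsf{BA}}^{\eq}$ is precisely the one where \Cref{lem:cloneseq} gives equality, which is why the hypothesis $\conn\notin\cC_{\mathsf{BA}}^{\eq}$ is the right dividing line.)

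The first thing I would record is that $\eq$ and $\lambda pqr.p{+}q{+}r$ are precisely the connectives in the list of \Cref{theconnectives} whose Boolean functions are both affine and truth-preserving, that is, the only listed connectives lying in $\cC_{\mathsf{BA}}^{\eq}$ (the clone of affine, $1$-preserving Boolean functions, which is what $\eq$ alone generates, since $(p\eq q)\eq r$ already yields $\lambda pqr.p{+}q{+}r$). Every other listed connective falls outside $\cC_{\mathsf{BA}}^{\eq}$: either because it is not affine (as with $\imp$, $\coimpl$, $\ifelse$, the threshold connectives $T^{n+1}_n$, $T^{n+1}_2$, and the four ternary lattice terms $\lambda pqr.\ldots$), or because it is affine but fails to preserve~$1$ (as with $\neg$ and~$+$). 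Hence it suffices to exhibit a listed connective in $\cC_{\mathsf{BA}}^{\conn}$ that does \emph{not} belong to $\cC_{\mathsf{BA}}^{\eq}$. Since $\conn$ is very significant, \Cref{theconnectives} already guarantees that $\cC_{\mathsf{BA}}^{\conn}$ contains \emph{some} listed connective, and the remaining job is to upgrade this, using $\conn\notin\cC_{\mathsf{BA}}^{\eq}$, to a listed connective outside $\cC_{\mathsf{BA}}^{\eq}$.

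I would then split on the way in which $\conn$ escapes $\cC_{\mathsf{BA}}^{\eq}$. If $\conn$ is affine but does not preserve~$1$, the argument is completely explicit: writing $\conn$ in algebraic normal form as $c_0+\sum_{i\in S}x_i$ (with sum taken modulo~$2$), very-significance forces $S\neq\varnothing$, while the failure of $1$-preservation forces $c_0\equiv|S|\pmod 2$. Identifying all the essential inputs but one to a single variable then collapses $\conn$ to $c_0+(|S|-1)x+y$, which is $+$ when $|S|$ is even and $\neg$ when $|S|$ is odd; both are listed connectives outside $\cC_{\mathsf{BA}}^{\eq}$, so \Cref{lem:clonesbot} applies. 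If instead $\conn$ is not affine, then $\conn$ itself already witnesses $\cC_{\mathsf{BA}}^{\conn}\not\subseteq\mathbb{A}$, and here I would appeal to the same Post's-lattice analysis that underlies \Cref{theconnectives}, refined so as to select a \emph{non-affine} listed connective (automatically outside $\cC_{\mathsf{BA}}^{\eq}$) inside $\cC_{\mathsf{BA}}^{\conn}$.

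The main obstacle is exactly this last, non-affine, case. \Cref{theconnectives} as stated produces only \emph{some} listed connective, and the affine witnesses $\eq$ and $\lambda pqr.p{+}q{+}r$ are a priori still admissible outputs; one must therefore verify that the minimal ``very significant'' clones generated by $\eq$ and by $\lambda pqr.p{+}q{+}r$ are the \emph{only} such minimal clones contained in $\cC_{\mathsf{BA}}^{\eq}$, so that a non-affine $\conn$ cannot have its clone trapped below them and must reach a non-affine listed connective. Establishing this cleanly is a matter of re-reading the relevant lower portion of Post's lattice rather than of a generic substitution trick; once it is in place, the reduction to \Cref{lem:clonesbot} described above closes the proof.
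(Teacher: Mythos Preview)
Your proposal is correct and follows the same route as the paper: reduce to \Cref{lem:clonesbot} by arguing, via \Cref{theconnectives} together with the observation that $\eq$ and $\lambda pqr.\,p{+}q{+}r$ are the only listed connectives lying in $\cC_{\mathsf{BA}}^{\eq}$, that $\cB_{\conn}$ must express one of the remaining listed connectives. The paper compresses all of this into a single sentence and leaves the Post's-lattice justification entirely implicit, whereas you spell out the affine-but-not-$1$-preserving case by hand; the non-affine case in your write-up defers to the same ``detailed analysis of Post's lattice'' that \Cref{theconnectives} invokes, so no additional idea is needed beyond what the paper already assumes.
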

\begin{proof}
  Note, by \Cref{theconnectives} and the fact that both~$\eq$ and $\lambda p q r. p+q+r$ belong to $\cC_{\mathsf{BA}}^{\eq}$, that $\conn$ fulfills the conditions of application of \Cref{lem:clonesbot}. 
 \qed\end{proof}}

\begin{example}\label{ex:sub1}
 For every connective $\conn$ among  
 $\neg$, $\imp$, 
 $\coimpl$, $+$, $\textsc{if}$, $T^{n+1}_n$ (for $n\in\nats$), $T^{n+1}_2$ (for $n\in\nats$), $\lambda pqr.p\lor(q\land r)$, $\lambda pqr.p\lor(q+r)$, $\lambda pqr.p\land(q\lor r)$, and $\lambda pqr.p\land(q\imp r)$,  
 we have that $\fib{\cB_{\conn}}{\cB_{\bbot}}\subsetneq \cB_{\conn\bot}$.
\hfill$\triangle$
\end{example} 

\begin{remark}\label{twovalsignif}
On a two-valued logic: 
(i) sentential variables are always significant, every nullary connective is either top-like or bottom-like; 
(ii) top-like term functions are always assigned the value~$1$ and bottom-like term functions are always assigned the value~$0$; 
(iii) significant singulary term functions all behave semantically either as Boolean affirmation or as Boolean negation. 
\hfill$\triangle$
\end{remark} 

\begin{lemma}\label{lem:TFalphas}
The logic of a significant
Boolean $k$-place connective~$\conn$ expresses some $1$-ary significant compound term function. 
\end{lemma}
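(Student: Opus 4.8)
The plan is to reduce the claim to a purely semantic construction and then exhibit the witness explicitly, using nothing but~$\conn$. By \Cref{twovalsignif}, a top-like term function is constantly~$1$ and a bottom-like one is constantly~$0$ over $\MatV_\TWO$, so the hypothesis that~$\conn$ is significant says exactly that its interpretation $\mytilde{\conn}$ is a \emph{non-constant} $k$-ary Boolean operation. What I must produce is a \emph{compound} term function in a single variable~$p$, built solely from~$\conn$, whose interpretation is non-constant; any such unary operation is then Boolean affirmation or Boolean negation, and in either case significant. The one constraint to keep in mind is that I may not assume $\top$ or $\bot$ are available, so every ``constant'' used in the construction must itself be manufactured out of~$\conn$.

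First I would consider the diagonal term function $\delta(p):=\conn(p,\ldots,p)$, which is compound and unary and satisfies $\mytilde{\delta}(x)=\mytilde{\conn}(x,\ldots,x)$. If $\mytilde{\conn}(0,\ldots,0)\neq\mytilde{\conn}(1,\ldots,1)$, then $\delta$ is already non-constant, hence significant, and the lemma follows at once with $\delta$ as witness.

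The hard part will be the degenerate case, where the diagonal collapses, i.e.\ $\mytilde{\conn}(0,\ldots,0)=\mytilde{\conn}(1,\ldots,1)=c$ for some $c\in\{0,1\}$ and $\delta$ is merely the constant~$c$. This is exactly where the prohibition on assumed constants bites, and my key idea is that the degenerate diagonal is not a dead end but precisely the source of the constant I need: I would recycle $\delta$ to freeze the ``background'' coordinates of~$\conn$ while letting~$p$ vary in a chosen set of coordinates, thereby interpolating a non-constant unary function. Concretely, since $\mytilde{\conn}$ is non-constant there is an input $(a_1,\ldots,a_k)$ with $\mytilde{\conn}(a_1,\ldots,a_k)=1-c$; I then set $\alpha(p):=\conn(z_1,\ldots,z_k)$, putting $z_j:=p$ in the coordinates with $a_j=1-c$ and $z_j:=\delta(p)$ (of value~$c$) in the coordinates with $a_j=c$. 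A two-point evaluation gives $\mytilde{\alpha}(1-c)=\mytilde{\conn}(a_1,\ldots,a_k)=1-c$ and $\mytilde{\alpha}(c)=\mytilde{\conn}(c,\ldots,c)=c$, so $\mytilde{\alpha}$ is the identity on $\MatV_\TWO$, that is, a significant compound unary term function, as required. I expect this collapsed case to be the only genuine obstacle; the rest is bookkeeping, the crux being the realization that the constant produced by the degenerate diagonal is exactly what rebuilds a non-constant unary function.
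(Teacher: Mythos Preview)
Your proposal is correct and follows essentially the same approach as the paper's proof: both first test the diagonal $\conn(p,\ldots,p)$, and when it collapses to a constant, both recycle that constant term to freeze selected coordinates of~$\conn$ (those matching the constant value in a tuple on which~$\mytilde{\conn}$ takes the opposite value) while leaving~$p$ in the remaining coordinates, producing a compound unary term that behaves as Boolean affirmation. The only cosmetic difference is that you handle the two degenerate cases uniformly via the parameter~$c\in\{0,1\}$, whereas the paper treats the top-like and bottom-like cases separately.
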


\begin{proof}
Let $\varphi$ denote the singulary term function induced by the formula~$\conn(\overline{p})$ obtained by substituting a fixed sentential variable~$p$ at all argument positions of $\conn(p_1,\ldots,p_k)$. If~$\varphi$ is significant, we are done.  Otherwise, there are two cases to consider. 

For the first case, suppose that $\varphi$ is top-like. Thus, given that~$
\conn$ is significant and the logic is two-valued, we know from \Cref{twovalsignif}(ii), in particular, that there must be some valuation~$v$ such that $\val(\conn(p_1,\ldots,p_k))=0$.  Set $I:=\{i:\val(p_i)=1\}$, and define the substitution~$\sigma$ by 
$\sigma(p_j):=\varphi(p)$ if $j\in I$, and $\sigma(p_j):=p$ otherwise.
Let $\psi$ denote the new singulary term function induced by $(\conn(p_1,\ldots,p_k))^\sigma$.  On the one hand, choosing a valuation~$\val^\prime$ such that $\val^\prime(p)=0$ we may immediately conclude that~$\val^\prime(\psi(p))=\val(\conn(p_1,\ldots,p_k))=0$.
On the other hand, choosing~$\val^{\prime\prime}$ such that $\val^{\prime\prime}(p)=1$ we see that
$\val^{\prime\prime}(\sigma(p_j))=1$ for every $1\leq j\leq k$. 
We conclude $\val^{\prime\prime}(\psi(p))=\val^{\prime\prime}(\conn(\overline{p}))=\val^{\prime\prime}(\varphi(p))$, thus $\val^{\prime\prime}(\psi(p))=1$, for~$\varphi$ was supposed in the present case to be top-like.
It follows that~$\psi(p)$ is indeed equivalent here to the sentential variable~$p$.

For the remaining case, where we suppose that~$\varphi$ is bottom-like, it suffices to set $I:=\{i:\val(p_i)=0\}$ and then reason analogously.  In both the latter cases our task is seen to have been accomplished in view of \Cref{twovalsignif}(i). 
\qed
\end{proof}

%

\begin{lemma}\label{lem:projsubst}
Let $\cL=\tuple{L_\Sigma(P),\der}$ be a two-valued logic whose language allows a very significant $k$-ary term function~$\varphi$, 
let~$I$ be the set of indices that identify the projective components of~$\varphi$, 
and let~$\sigma$ be some substitution such that $\sigma(p_i)=p_i$, for $i\in I$, and $\sigma(p_i)=p_{k+i}$, for $i\notin I$.
Then, $\varphi(p_1,\ldots,p_k) \not \der (\varphi(p_1,\ldots,p_k))^\sigma$.
\end{lemma}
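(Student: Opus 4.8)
The plan is to exploit that $\cL$ is two-valued, so that $\varphi(p_1,\ldots,p_k)\not\der(\varphi(p_1,\ldots,p_k))^\sigma$ follows as soon as I exhibit a single valuation~$w$ over the defining matrix that satisfies the premise $\varphi(p_1,\ldots,p_k)$ while falsifying the conclusion $(\varphi(p_1,\ldots,p_k))^\sigma$. Throughout I work with the Boolean function $\mytilde{\varphi}\colon\{0,1\}^k\to\{0,1\}$ induced by~$\varphi$, reading the syntactic hypotheses on~$\varphi$ off semantically. Recall that $I=\{j:\varphi(p_1,\ldots,p_k)\der p_j\}$ collects exactly those coordinates that are pinned to~$1$ whenever $\mytilde{\varphi}$ outputs~$1$.

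First I would record two consequences of very significance. Since~$\varphi$ is not bottom-like, $\mytilde{\varphi}$ is not identically~$0$, so there is an input $\bar a=(a_1,\ldots,a_k)$ with $\mytilde{\varphi}(\bar a)=1$; by the definition of~$I$ this forces $a_j=1$ for every $j\in I$. Next I would observe that, for the \emph{maximal} witness~$I$, being a projection-conjunction is equivalent to $\{p_j:j\in I\}\der\varphi(p_1,\ldots,p_k)$: any witnessing set~$J$ must satisfy $J\subseteq I$ by clause~(i) of the definition, whence monotonicity lifts clause~(ii) from~$J$ to~$I$, while the converse direction is immediate from the definition of~$I$. As~$\varphi$ is not a projection-conjunction, this equivalence yields $\{p_j:j\in I\}\not\der\varphi$, i.e.\ there is a valuation realizing an input $\bar b=(b_1,\ldots,b_k)$ with $b_j=1$ for every $j\in I$ yet $\mytilde{\varphi}(\bar b)=0$.

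Then I would glue these two inputs along the shared projective coordinates. Writing $(\varphi(p_1,\ldots,p_k))^\sigma=\varphi(y_1,\ldots,y_k)$ with $y_i=p_i$ for $i\in I$ and $y_i=p_{k+i}$ for $i\notin I$, I would define~$w$ by $w(p_i):=a_i$ for $1\le i\le k$ and $w(p_{k+i}):=b_i$ for $i\notin I$. Because $a_j=1=b_j$ for $j\in I$, the projective variables carry the value~$1$ under both readings, so $w(\varphi(p_1,\ldots,p_k))=\mytilde{\varphi}(\bar a)=1$, whereas $w((\varphi(p_1,\ldots,p_k))^\sigma)=\mytilde{\varphi}(\bar b)=0$, since the fresh variables $p_{k+i}$ supply precisely the non-projective entries of~$\bar b$ and the $I$-coordinates agree. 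Thus~$w$ separates premise from conclusion, establishing $\varphi(p_1,\ldots,p_k)\not\der(\varphi(p_1,\ldots,p_k))^\sigma$.

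The only delicate point is the reduction in the second paragraph: one must notice that the set~$I$ of \emph{all} projective components is itself a legitimate witness for projection-conjunction whenever any witness exists, so that the failure of "projection-conjunction" delivers exactly the separating input $\bar b$ with its $I$-coordinates pinned to~$1$. Everything else is a routine check that the two chosen inputs agree on the overlapping variables and can therefore be realized simultaneously by the single valuation~$w$.
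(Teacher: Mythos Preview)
Your proof is correct and follows essentially the same approach as the paper's: both extract a satisfying input from non-bottom-likeness, a falsifying input with the $I$-coordinates set to~$1$ from the failure of projection-conjunction, and then glue the two inputs into a single separating valuation via the disjointness of the fresh variables $p_{k+i}$. Your explicit remark that~$I$, being the full set of projective components, witnesses projection-conjunction whenever any~$J$ does is a nice clarification that the paper leaves implicit.
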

\begin{proof}
By the assumption that $\varphi$ is very significant, we know that this term function is not a projection-conjunction. Thus, given that $I\subseteq\{1,\ldots,k\}$ is the exact set of indices such that $\varphi(p_1,\ldots,p_k)\der p_i$, for every $i\in I$, we conclude that $\{p_i:i\in I\}\not\der \varphi(p_1,\ldots,p_k)$.  There must be, then, some valuation~$v$ over $\{0,1\}$ such that $\val(p_i)=1$, for every $i\in I$, while $\val(\varphi(p_1,\ldots,p_k))=0$.
From the assumption about significance we also learn that $\varphi$ is not bottom-like, thus, in view of two-valuedness and the \Cref{twovalsignif}(ii), we know that there must be some valuation~$\val^\prime$ such that $\val^\prime(\varphi(p_1,\ldots,p_k))=1$.  Using the assumption that $\varphi(p_1,\ldots,p_k)\der p_i$ for every $i\in I$ one may conclude that $\val^\prime(p_i)=\val(p_i)=1$ for every $i\in I$.
Our final step to obtain a counter-model to witness $\varphi(p_1,\ldots,p_k) \not \der (\varphi(p_1,\ldots,p_k))^\sigma$ 
is to glue together the two latter valuations by considering a valuation~$\val^{\prime\prime}$ such that $\val^{\prime\prime}(p_j)=\val^{\prime}(p_j)$ for $1\leq j\leq k$ (satisfying thus the premise) 
and such that  $\val^{\prime\prime}(p_j)=\val(p_{j})$ for $j>k$ (allowing for the conclusion to be falsified).
\qed
\end{proof}

\begin{proposition}\label{prop:failingmaya}
The fibring $\fib{\cB_{\conn_1\!}}{\cB_{\conn_2}}$ of the logic of a very significant clas\-sical connective~$\conn_1$ and 
the logic of a non-top-like Boolean connective~$\conn_2$ distinct from~$\bot$
fails to be locally tabular, and therefore $\fib{\cB_{\conn_1\!}}{\cB_{\conn_2}}\subsetneq \cB_{\conn_1 \conn_2}$. 
\end{proposition}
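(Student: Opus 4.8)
The plan is to prove the non-local-tabularity claim, from which the strict inclusion follows almost for free. Since $\fib{\cB_{\conn_1}}{\cB_{\conn_2}}$ is by definition the smallest logic extending both components, and $\cB_{\conn_1\conn_2}$ (being $\conn_1$-classical and $\conn_2$-classical) is one such extension, we always have $\der_{\conn_1\fibS\conn_2}\subseteq\der_{\conn_1\conn_2}$; and $\cB_{\conn_1\conn_2}$, being two-valued, is locally tabular. So once the fibring is shown \emph{not} to be locally tabular, the two logics cannot coincide. Hence I would concentrate on exhibiting, over a fixed finite set of sentential variables, an infinite family of formulas that are pairwise non-equivalent in $\fib{\cB_{\conn_1}}{\cB_{\conn_2}}$.

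The engine is a ``freezing context'' coming from $\conn_1$ together with an unbounded supply of ``contents'' coming from $\conn_2$. First I would observe that a very significant connective must depend genuinely on one of its \emph{non-projective} argument slots: if it depended only on its projective slots, a short two-valued computation shows it would be forced to be either bottom-like or a projection-conjunction, contradicting very-significance. Fix such a slot $j^\ast$ and let $\kappa(x):=\conn_1(a_1,\dots,a_k)$ be the term obtained by placing $x$ at position $j^\ast$ and pairwise distinct fresh variables $a_i$ elsewhere. Two properties matter: (i) $\kappa$ does not \emph{release} its hole, i.e.\ $\kappa(x)\not\der_{\conn_1}x$, which is exactly non-projectivity of $j^\ast$; and (ii) $\kappa$ \emph{separates} distinct holes, i.e.\ $\kappa(x)\not\dashv\vdash_{\conn_1}\kappa(x')$ whenever $x\neq x'$, because the two terms induce different two-valued term functions --- this is essentially \Cref{lem:projsubst}. (When $\conn_1$ already expresses a non-affirmation singulary reduct, \Cref{lem:TFalphas} lets one take $\kappa$ unary, simplifying the bookkeeping.)

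Next, since $\conn_2$ is non-top-like and distinct from~$\bot$, it cannot be nullary, so it has arity at least~$1$; nesting it produces infinitely many pairwise distinct (as syntactic objects) $\conn_2$-headed formulas $M_1,M_2,\dots$ over just two variables, e.g.\ $M_1:=\conn_2(p,r,\dots,r)$ and $M_{n+1}:=\conn_2(M_n,r,\dots,r)$. Crucially, non-top-likeness guarantees that no $M_n$ is a theorem of the fibring; and even if $\conn_2$ happens to be bottom-like (so each $M_n$ is explosive) or a projection (so the $M_n$ collapse as standalone formulas), these facts are never triggered inside $\kappa$, precisely because the hole $j^\ast$ is never released. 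I would then consider the family $\kappa(M_1),\kappa(M_2),\dots$, all living over the finite set $\{a_1,\dots,a_k,p,r\}$.

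The heart of the argument, and the step I expect to be the main obstacle, is to verify pairwise non-equivalence using \Cref{bazooka}. For $m\neq n$ one computes $\St_{\conn_1\fibS\conn_2}(\{\kappa(M_n)\})$ and checks that it is \emph{inert}: besides $\kappa(M_n)$ itself it contains at most the fresh variables sitting at projective slots of $\conn_1$, and in particular it never derives the monolith $M_m$ nor does it become $\der_{\conn_2}$-explosive --- this is exactly where property~(i) is indispensable. Consequently the distinguishing $\conn_1$-monolith $M_m$ does not belong to $\Mx^{\conn_1}_{\conn_1\fibS\conn_2}(\{\kappa(M_n)\},\kappa(M_m))$, so condition $(\mathbf{Z}^{\conn_1})$ collapses to asking whether the $\conn_1$-skeletons satisfy $\kappa(x_{M_n})\der_{\conn_1}\kappa(x_{M_m})$; by property~(ii) this fails. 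Hence $\kappa(M_n)\not\der_{\conn_1\fibS\conn_2}\kappa(M_m)$, the family falls into infinitely many equivalence classes, the fibring is not locally tabular, and the claimed strict inclusion follows. The delicate points are the uniform treatment of the three shapes of $\conn_2$ (significant, projection-conjunction with non-empty projective set, and bottom-like of positive arity) and the careful bookkeeping of the saturation so that neither explosion nor hole-release ever occurs.
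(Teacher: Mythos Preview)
Your overall architecture mirrors the paper's: freeze a $\conn_1$-context around a non-projective slot, fill it with syntactically distinct $\conn_2$-headed monoliths, and use \Cref{bazooka} plus \Cref{cor:seqmon} to reduce to a skeleton non-derivability in $\cB_{\conn_1}$. The observation that a very significant connective must depend on some non-projective slot, your property~(i), and your property~(ii) (which is easier than \Cref{lem:projsubst}, since dependence on slot~$j^\ast$ directly yields a two-valued countermodel to $\kappa(x)\der_{\conn_1}\kappa(x')$) are all fine.

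The gap is in the construction of the $M_n$. Your claim that ``non-top-likeness guarantees that no $M_n$ is a theorem of the fibring'' is false. Take $\conn_2(x,y,z):=y\leftrightarrow z$: it is neither top-like nor $\bot$, yet your $M_1=\conn_2(p,r,r)$ is a $\cB_{\conn_2}$-tautology, and inductively so is every $M_n$. Then $M_m\in\Mx^{\conn_1}_{\conn_1\fibS\conn_2}(\{\kappa(M_n)\},\kappa(M_m))$ and, worse, all $M_j$ with $j\leq n$ enter the saturation. With $\conn_1={\to}$ and $j^\ast=1$ (so $\kappa(x)=x\to a_2$), one now gets $M_n$ as a theorem, then $a_2$ by modus ponens from $\kappa(M_n)$, then $\kappa(M_m)=M_m\to a_2$; so $\kappa(M_n)\der_{\conn_1\fibS\conn_2}\kappa(M_m)$ and your family collapses.

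The paper closes exactly this hole by applying \Cref{lem:TFalphas} to $\conn_2$ (not to $\conn_1$, as you suggest parenthetically): when $\conn_2$ is significant one extracts a \emph{significant} singulary compound $\psi_0$ in $L_{\conn_2}$, and by \Cref{twovalsignif}(iii) every iterate $\psi_n=\psi_0^{\,n}$ is again significant, hence never a theorem; when $\conn_2$ is bottom-like any enumeration of compound $L_{\conn_2}(\{p\})$-formulas works for the same reason. The paper also places distinct $\psi_{n\times i}$ at \emph{all} non-projective slots so that \Cref{lem:projsubst} applies verbatim, but your single-slot variant would be fine once the monoliths are guaranteed non-top-like.
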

\begin{proof}
{

We want to build over $\Sigma_1\cup\Sigma_2$, on a finite number of sentential variables, an infinite family $\{\conn_m\}_{m\in\mathbb{N}}$ of syntactically distinct formulas  
that are pairwise inequivalent according to $\fib{\cB_{\conn_1\!}}{\cB_{\conn_2}}$.

In case~$\conn_2$ is significant we know from \Cref{lem:TFalphas} that we can count on a singulary significant term function~$\psi_0$ allowed by $L_{\conn_2}(\{p\})\setminus P$.
Set, in this case, $\psi_{n+1}:=\psi_0\circ\psi_n$. 
Given the assumption that  $\cB_{\conn_2}$ is a two-valued logic, in view of \Cref{twovalsignif}(iii) it should be clear that no such~$\psi_{n+1}$ can be top-like.  To the same effect, in case $\conn_2$ is bottom-like, just consider any enumeration $\{\psi_m\}_{m\in\mathbb{N}}$ of the singulary term functions allowed by $L_{\conn_2}(\{p\})\setminus P$.
In both cases we see then how to build a family of syntactically distinct $\{\conn_2\}$-headed singulary term functions, and these will be used below to build a certain convenient family of ($\{\conn_1\}$-headed) formulas in the mixed language.}

In what follows we abbreviate $\conn_1(p_1,p_2,\ldots,p_{k_1})$ to~$C$.
We may assume, without loss of generality, that there is some $j<k_1$ such that $C\der_{\conn_1}p_i$ for every $i\leq j$ and $C\not\der_{\conn_1}p_i$ otherwise. 
Let $\sigma_n$, for each $n>0$, denote a substitution such that $\sigma_n(p_i)=p_i$, for $i\leq j$, and $\sigma_n(p_i)=\psi_{n\times i}(p)$ otherwise.
We claim that $C^{\sigma_a}\not\der_{\fib{\conn_1}{\conn_2}}C^{\sigma_b}$, for every $a\neq b$.


To check the claim, first note that, for each $a>0$, we have
$\St_{{\conn_1}{\fibS}{\conn_2}}(\{C^{\sigma_a}\})
=\{C^{\sigma_a}\}\cup \{p_i:i\leq j\}.$
From the fact that~$C$ is a significant term function, it follows that $\St_{{\conn_1}{\fibS}{\conn_2}}(\{C^{\sigma_a}\})$ is neither $\der_{\conn_1}$-explosive nor $\der_{\conn_2}$-explosive.
%
For arbitrary $b>0$, since $\Mmon_{\Sigma_2}(\psi_b(p))= \varnothing$, we have $\Mx^2_{{\conn_1}{\fibS}{\conn_2}}(\{C^{\sigma_a}\},\psi_b(p))=\varnothing$.
Therefore, using \Cref{bazooka}\ 
we may conclude that $C^{\sigma_a}\not\der_{{\conn_1}{\fibS}{\conn_2}}\psi_b(p)$ and, given that $\Mmon_{\Sigma_1}(C^{\sigma_b})\subseteq \{\psi_k(p):k\in \nats\}$,  
it also follows that $\Mx^1_{{\conn_1}{\fibS}{\conn_2}}(\{C^{\sigma_a}\},C^{\sigma_b})=\varnothing$.
Note, in addition, for each $n>0$, that $\skel_{\Sigma_1}(C)=C^{\sigma_n^\prime}$, where $\sigma_n^\prime(p_i):=p_i$ for $i\in I$, and $\sigma_n^\prime(p_i):=x_{\psi_{n\times i}}$ for $i\notin I$.
Therefore, given that $\conn_1$ is very significant, using \Cref{cor:seqmon} and \Cref{lem:projsubst} we conclude at last, for every $a\neq b$, 
that~$C^{\sigma_b}$ does not follow from~$C^{\sigma_a}$ according to $\fib{\cB_{\conn_1\!}}{\cB_{\conn_2}}$.  The latter combined logic, thus, fails to be locally tabular.
%
%
As a consequence, given that 
all two-valued logics are 
locally tabular 
we see that $\fib{\cB_{\conn_1\!}}{\cB_{\conn_2}}$ cannot coincide with ${\cB_{\conn_1\conn_2}}$.
  \qed
\end{proof}



\begin{example}
 If $\conn_1$ and $\conn_2$ are among the Boolean connectives mentioned in \Cref{theconnectives} then
 we have that $\fib{\cB_{\conn_1\!}}{\cB_{\conn_2}}\subsetneq \cB_{\conn_1\conn_2}$. 
\hfill$\triangle$
\end{example}

{ 

The following theorem makes use of the previous results to capture the exact circumstances in which the logic that merges the axiomatizations of two classical connectives coincides with the logic of these Boolean connectives.

\begin{theorem}\label{lem:colapse}
Consider the logic $\cB_{\conn_1\!}$ of the classical connective~$\conn_1$ and the logic $\cB_{\conn_2}$ of the distinct classical connective~$\conn_2$.  Then, $\fib{\cB_{\conn_1\!}}{\cB_{\conn_2}}=
\cB_{\conn_1\conn_2}$ 
iff  \underline{either}: 
\begin{itemize}
 \item[$\mathbf{(a)}$]  at least one among $\conn_1$ and $\conn_2$ is top-like, 
 \underline{or}
 \item[$\mathbf{(b)}$]  
 {%
 neither $\conn_1$ nor $\conn_2$ are very significant, 
 }%
 \underline{or}
 
 \item[$\mathbf{(c)}$] $\conn_1\in \cC_{\mathsf{BA}}^{\eq}$ and $\conn_2=\bot$ 
 {%
 (or $\conn_1=\bot$ and $\conn_2\in \cC_{\mathsf{BA}}^{\eq}$).
 }%
\end{itemize}
\end{theorem}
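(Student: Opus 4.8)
The plan is to prove the two directions separately, assembling the groundwork of the previous propositions rather than computing anything new. For the right-to-left implication I would simply observe that each clause yields the equality by direct appeal to an earlier result: clause $\mathbf{(a)}$ is exactly \Cref{lem:clonestop} (taking the top-like component as the second input and the remaining Boolean connective as the first); clause $\mathbf{(b)}$ is \Cref{prop:projbot}; and clause $\mathbf{(c)}$ follows from \Cref{lem:cloneseq} applied with $\conw=\{\conn_1\}$ (respectively $\conw=\{\conn_2\}$), since $\conn_1\in\cC_{\mathsf{BA}}^{\eq}$ gives $\fib{\cB_{\conn_1}}{\cB_{\conn_2}}=\cB_{\conn_1\bot}=\cB_{\conn_1\conn_2}$. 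This direction is essentially bookkeeping.

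The substantive direction is left-to-right, which I would establish in contrapositive form: assuming that none of $\mathbf{(a)}$, $\mathbf{(b)}$, $\mathbf{(c)}$ holds, I would exhibit the strict inclusion $\fib{\cB_{\conn_1}}{\cB_{\conn_2}}\subsetneq\cB_{\conn_1\conn_2}$. Negating the three clauses yields: neither connective is top-like; at least one connective is very significant; and it is \emph{not} the case that one connective lies in $\cC_{\mathsf{BA}}^{\eq}$ while the other is $\bot$. Since the statement is symmetric in $\conn_1$ and $\conn_2$ (each clause being invariant under swapping them, clause $\mathbf{(c)}$ explicitly so), I may assume without loss of generality that $\conn_1$ is very significant; note this already forces $\conn_1\neq\bot$ and $\conn_1$ not top-like.

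The argument then splits on the second component. If $\conn_2=\bot$, the failure of $\mathbf{(c)}$ forces $\conn_1\notin\cC_{\mathsf{BA}}^{\eq}$, so with $\conn_1$ very significant I would invoke \Cref{pro:connbot} to obtain $\fib{\cB_{\conn_1}}{\cB_{\conn_2}}\subsetneq\cB_{\conn_1\bot}$. If instead $\conn_2\neq\bot$, then the failure of $\mathbf{(a)}$ ensures that $\conn_2$ is a non-top-like Boolean connective distinct from $\bot$, so with $\conn_1$ very significant I would apply \Cref{prop:failingmaya}: the fibring fails to be locally tabular, whereas $\cB_{\conn_1\conn_2}$ is two-valued and hence locally tabular, whence the strict inclusion. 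These two sub-cases are exhaustive and complete the contrapositive.

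I expect the main obstacle to be organizational rather than technical: one must make sure the three clauses are genuinely closed under the $\conn_1\!\leftrightarrow\!\conn_2$ symmetry, so that the ``without loss of generality'' step is legitimate, and that in the $\conn_2=\bot$ sub-case the failure of $\mathbf{(c)}$ really does place $\conn_1$ outside $\cC_{\mathsf{BA}}^{\eq}$, which is precisely the hypothesis \Cref{pro:connbot} requires. No fresh semantic computation is needed, since the exhaustiveness of the connective list in \Cref{theconnectives}, together with \Cref{prop:failingmaya}, already covers every very significant connective outside $\cC_{\mathsf{BA}}^{\eq}$.
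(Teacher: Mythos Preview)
Your proposal is correct and follows essentially the same route as the paper's own proof: the right-to-left direction is handled by \Cref{lem:clonestop}, \Cref{prop:projbot}, and \Cref{lem:cloneseq}, and the left-to-right direction by \Cref{pro:connbot} and \Cref{prop:failingmaya}, with exactly the case split you describe. Your write-up simply unpacks the paper's two-sentence proof, making explicit the symmetry reduction and the sub-case analysis on whether $\conn_2=\bot$; the only minor slip is in your closing remark, where it is \Cref{pro:connbot} (via \Cref{lem:clonesbot} and \Cref{theconnectives}) that handles very significant connectives outside $\cC_{\mathsf{BA}}^{\eq}$, not \Cref{prop:failingmaya}, but this does not affect the argument itself.
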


\begin{proof}
 The direction from right to left follows from \Cref{lem:clonestop,prop:projbot,lem:cloneseq}. 
 The other direction follows from \Cref{pro:connbot} and \Cref{prop:failingmaya}.
\qed\end{proof}

}

%
%

We can finally obtain the envisaged characterization result:

 {
\begin{theorem}


%

%
%

Let $\conw_1$ and $\conw_2$ be non-functionally complete disjoint sets of 
connectives such that  
 $\conw=\conw_1\cup\conw_2$ is functionally complete. 
 The disjoint fibring of the classical logics of $\conw_1$ and $\conw_2$ is classical
 iff
$\cC_{\mathsf{BA}}^{\conw_i}\in \{\mathbb{D},\mathbb{T}_0^\infty\}\cup \{\mathbb{T}_0^k:k\in \nats\} \mbox{ and }\cC_{\mathsf{BA}}^{\conw_j}= \mathbb{UP}_1$, {for some $i\in \{1,2\}$ and $j=3-i$}.
\end{theorem}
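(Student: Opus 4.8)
The plan is to reduce the displayed clone condition to a single clean statement and then mirror the structure of the already-proven lemmas. Under the standing hypotheses that $\conw=\conw_1\cup\conw_2$ is functionally complete and that neither $\conw_i$ is, the right-hand side is equivalent to ``exactly one of the two clones equals $\mathbb{UP}_1$''. Indeed, if $\cC_{\mathsf{BA}}^{\conw_j}=\mathbb{UP}_1$ then $\cC_{\mathsf{BA}}^{\conw_i}\vee\mathbb{UP}_1=\cC_{\mathsf{BA}}$, and by the analysis of Post's lattice recalled in \Cref{sec:prelim} the clones whose join with $\mathbb{UP}_1$ yields $\cC_{\mathsf{BA}}$ are exactly $\mathbb{D}$, $\mathbb{T}_0^\infty$ and the $\mathbb{T}_0^k$; conversely the displayed condition forces one clone to be $\mathbb{UP}_1$, and since $\mathbb{UP}_1\vee\mathbb{UP}_1=\mathbb{UP}_1\neq\cC_{\mathsf{BA}}$ at most one clone can be $\mathbb{UP}_1$. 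So it suffices to prove that $\fib{\cB_{\conw_1}}{\cB_{\conw_2}}$ is classical iff some $\cC_{\mathsf{BA}}^{\conw_i}=\mathbb{UP}_1$, i.e.\ iff one of the sets consists solely of top-like connectives.

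For the right-to-left direction, assume without loss of generality that $\conw_2$ contains only top-like connectives. I would first record the set-level generalization of \Cref{lem:clonestop}: its property $(\star)$ becomes ``$\Delta\der_{\conw_2}\psi$ iff $\psi\in\Delta$ or $\head(\psi)\in\conw_2$'', which still holds because the logic of a family of top-like connectives derives exactly the hypotheses together with its theorems (the $\conw_2$-headed formulas). The remainder of that proof transfers verbatim, giving $\fib{\cB_{\conw_1}}{\cB_{\conw_2}}=\cB_{\conw}$, which is classical logic since $\conw$ is functionally complete.

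For the left-to-right direction I would argue by contraposition: assuming neither clone equals $\mathbb{UP}_1$, I would show $\fib{\cB_{\conw_1}}{\cB_{\conw_2}}\subsetneq\cB_\conw$. Since $\cC_{\mathsf{BA}}^{\conw_1}\vee\cC_{\mathsf{BA}}^{\conw_2}=\cC_{\mathsf{BA}}\not\subseteq\cC_{\mathsf{BA}}^{\land\top\bot}$, at least one side, say $\conw_1$, escapes $\cC_{\mathsf{BA}}^{\land\top\bot}$ and hence, by \Cref{theconnectives}, expresses a very significant connective. I then split on $\conw_2$. \textbf{(I)} If $\conw_2$ expresses some non-top-like connective other than the nullary $\bot$ (in particular any significant connective, or any bottom-like connective of arity $\geq 1$), I would rerun the construction of \Cref{prop:failingmaya}, now taking the very significant term function supplied by $\conw_1$ and this connective supplied by $\conw_2$ in place of the primitive connectives there; the argument only uses two-valuedness together with the existence of the relevant term functions and of infinitely many syntactically distinct singulary $\conw_2$-term functions, so it still yields an infinite family of pairwise inequivalent mixed formulas, so the fibring is not locally tabular and differs from the (locally tabular) $\cB_\conw$. \textbf{(II)} Otherwise the only non-top-like connective reachable in $\conw_2$ is the nullary $\bot$, whence $\bot\in\conw_2$ as a primitive and all other connectives of $\conw_2$ are top-like, so $\cC_{\mathsf{BA}}^{\conw_2}\subseteq\cC_{\mathsf{BA}}^{\top\bot}$. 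Here I would invoke conservativity: $\cB_{\conw_2}$ conservatively extends $\cB_\bot$, so by \Cref{prop:conslifts} the fibring conservatively extends $\fib{\cB_{\conw_1}}{\cB_\bot}$, and were it classical, restriction to $L_{\conw_1\cup\{\bot\}}$ would force $\fib{\cB_{\conw_1}}{\cB_\bot}=\cB_{\conw_1\cup\{\bot\}}$. But $\cC_{\mathsf{BA}}^{\conw_1}\subseteq\cC_{\mathsf{BA}}^{\eq}$ is impossible (else the join would lie in the affine clone $\mathbb{A}=\cC_{\mathsf{BA}}^{\eq\bot}\neq\cC_{\mathsf{BA}}$), so $\conw_1$ expresses a very significant connective outside $\cC_{\mathsf{BA}}^{\eq}$; by \Cref{theconnectives} and the reasoning of \Cref{pro:connbot} it then expresses a listed connective distinct from $\eq$ and $\lambda pqr.p+q+r$, and \Cref{lem:clonesbot} gives $\fib{\cB_{\conw_1}}{\cB_\bot}\subsetneq\cB_{\conw_1\cup\{\bot\}}$, the desired contradiction.

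I expect the main obstacle to be twofold. First, making rigorous that \Cref{prop:failingmaya}, stated for single primitive connectives, applies when the very significant and the auxiliary connectives are merely \emph{term functions} of the two-valued logics $\cB_{\conw_1}$ and $\cB_{\conw_2}$: one must check that the monolith/skeleton bookkeeping behind \Cref{bazooka}, \Cref{cor:seqmon} and \Cref{lem:projsubst} is insensitive to whether these connectives are primitive, and that excluding the nullary $\bot$ is precisely what guarantees the infinite supply of syntactically distinct singulary term functions needed to break local tabularity. Second, the clone bookkeeping in case \textbf{(II)}—that a set expressing a very significant but non-affine connective must, through \Cref{theconnectives}, express a \emph{witnessing} listed connective outside $\cC_{\mathsf{BA}}^{\eq}$ so that \Cref{lem:clonesbot} applies—needs the fine reading of Post's lattice already used in \Cref{pro:connbot}. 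The example $\conw_1=\{\eq,\land\}$, $\conw_2=\{\bot\}$, where every primitive pair collapses by \Cref{lem:cloneseq} and \Cref{prop:projbot} yet $\conw_1$ expresses $\imp\notin\cC_{\mathsf{BA}}^{\eq}$, shows exactly why the theorem cannot be reduced to \Cref{lem:colapse} pair-by-pair and must be argued at the level of clones.
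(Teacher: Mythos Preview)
Your argument is correct, and the route is genuinely different from the paper's. The paper runs the left-to-right direction through \Cref{lem:colapse}: using \Cref{prop:conslifts}, the assumption $\fib{\cB_{\conw_1}}{\cB_{\conw_2}}=\cB_\conw$ forces every \emph{primitive} pair $(\conn_1,\conn_2)\in\conw_1\times\conw_2$ to satisfy one of items $\mathbf{(a)}$, $\mathbf{(b)}$, $\mathbf{(c)}$, and then a case analysis on which items occur drives the clone of one side down into $\cC_{\mathsf{BA}}^{\eq}$ and the other into $\cC_{\mathsf{BA}}^{\top\bot}$, contradicting functional completeness. You instead bypass \Cref{lem:colapse} entirely and appeal directly to the engines behind it, \Cref{prop:failingmaya} and \Cref{lem:clonesbot}, together with \Cref{prop:conslifts}. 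The payoff of the paper's route is modularity; the payoff of yours is that the delicate step in the paper's argument --- ruling out the coexistence of $\eq$-like and $\land$/$\bot$-like primitives in $\conw_i$, which your example $\conw_1=\{\eq,\land\}$, $\conw_2=\{\bot\}$ exposes --- is replaced by a clean invocation of \Cref{lem:clonesbot}, which is already stated for sets.

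One simplification: your ``first obstacle'' is illusory. If $\cC_{\mathsf{BA}}^{\conw_1}\not\subseteq\cC_{\mathsf{BA}}^{\land\top\bot}$ then some \emph{primitive} of $\conw_1$ must be very significant (since projection-conjunctions and bottom-like connectives generate only $\cC_{\mathsf{BA}}^{\land\top\bot}$). Likewise, in case \textbf{(I)} the non-top-like connective of $\conw_2$ other than nullary $\bot$ can be taken primitive. Hence you never need to rerun \Cref{prop:failingmaya} for term functions: apply it to the primitives $\conn_1\in\conw_1$, $\conn_2\in\conw_2$ as stated, obtaining $\fib{\cB_{\conn_1}}{\cB_{\conn_2}}\subsetneq\cB_{\conn_1\conn_2}$, and then lift via \Cref{prop:conslifts} exactly as you already do in case \textbf{(II)}. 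This makes both cases uniform and removes the monolith/skeleton bookkeeping worry. Your second obstacle --- that $\cC_{\mathsf{BA}}^{\conw_1}\not\subseteq\cC_{\mathsf{BA}}^{\eq}$ together with $\cC_{\mathsf{BA}}^{\conw_1}\not\subseteq\cC_{\mathsf{BA}}^{\land\top\bot}$ forces $\conw_1$ to express one of the \Cref{theconnectives} connectives other than $\eq$ and $\lambda pqr.p{+}q{+}r$ --- is a genuine Post-lattice check, but it is precisely the content of the proof of \Cref{pro:connbot}, and the paper's own argument relies on the same fact at the corresponding point.
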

\begin{proof}
Note that if $\cC_{\mathsf{BA}}^{\conw_i}\in \{\mathbb{D},\mathbb{T}_0^\infty\}\cup \{\mathbb{T}_0^k:k\in \nats\}$  and $\cC_{\mathsf{BA}}^{\conw_j}= \mathbb{UP}_1$, for  $i\neq j\in \{1,2\}$, then we have that $\conw$ is functionally complete. 
For the right to left implication, it suffices to invoke Proposition~\ref{prop:conslifts} and item $\mathbf{(a)}$ of Theorem~\ref{lem:colapse}.

As for the converse implication, let us assume that $\cB_{{\conw}_1}\fibS \cB_{\conw_2}= \cB_{{\conw}}$. Using Proposition~\ref{prop:conslifts}, we know that for every pair of connectives $\conn_1\in\conw_1$ and $\conn_2\in\conw_2$ one of the items $\mathbf{(a)}$, $\mathbf{(b)}$ or $\mathbf{(c)}$ of Theorem~\ref{lem:colapse} must hold.
If $\mathbf{(a)}$ holds in all cases, then, without loss of generality, $\cC_{\mathsf{BA}}^{\conw_j}=\mathbb{UP}_1$. This, given the functional completeness of $\conw$, implies that $\cC_{\mathsf{BA}}^{\conw_i}\in  \{\mathbb{D},\mathbb{T}_0^\infty\}\cup \{\mathbb{T}_0^k:k\in \nats\}$.
Otherwise, we would have $\cC_{\mathsf{BA}}^{\conw_i}$ and $\cC_{\mathsf{BA}}^{\conw_j}$ both distinct from $\mathbb{UP}_1$, and items $\mathbf{(b)}$ or $\mathbf{(c)}$ of Theorem~\ref{lem:colapse} would have to hold in all the remaining cases. If $\mathbf{(b)}$ holds in all the remaining cases then we would conclude that $\conw_i\cup\conw_j$ contains only connectives that are not very significant, and that would contradict the functional completeness of $\conw$. Thence, without loss of generality, we could say that $\cC_{\mathsf{BA}}^{\conw_i}$ contains very significant connectives, and item $\mathbf{(c)}$ of Theorem~\ref{lem:colapse} would have to hold in those cases. But this would mean that $\cC_{\mathsf{BA}}^{\conw_i}\subseteq \cC_{\mathsf{BA}}^{\eq\land\top\bot}=\cC_{\mathsf{BA}}^{\eq\land\bot}$ and $\cC_{\mathsf{BA}}^{\conw_j}\subseteq \cC_{\mathsf{BA}}^{\top\bot}$. Note, however, that neither $\land$ nor $\bot$ can coexist in $\cC_{\mathsf{BA}}^{\conw_i}$ with $\eq$, or the underlying logic would express some very significant connective not expressible using only~$\eq$. We are therefore led to conclude that $\cC_{\mathsf{BA}}^{\conw_i}\subseteq \cC_{\mathsf{BA}}^{\eq}$ and $\cC_{\mathsf{BA}}^{\conw_j}\subseteq \cC_{\mathsf{BA}}^{\top\bot}$.  But this is impossible, as we would then have $\cC_{\mathsf{BA}}^{\conw}\subseteq\mathbb{A}$, contradicting the functional completeness of $\conw$.
    \qed
 \end{proof}
}

\section{Closing remarks}\label{sec:final}

\noindent
In the present paper, we have investigated and fully characterized the situations when merging two disjoint fragments of classical logic still results in a fragment of classical logic. 
As a by-product, we showed that recovering full classical logic in such a manner can only be done 
when one of the logics is a fragment of classical logic consisting exclusively of top-like connectives, while the other forms a functionally complete set of connectives with the addition of~$\top$. Our results take full advantage of the characterization of Post's lattice, and may be seen as an application of recent developments concerning fibred logics. Though our conclusions cannot be seen as a total surprise, we are not aware of any other result of this kind. Some unexpected situations do pop up, like the fact that $\fib{\cB_{\eq}}{\cB_{\bot}}=\cB_{\eq\bot}$, or the fact that $\fib{\cB_{\,\coimplsub}}{\cB_{\top}}$ and $\fib{\cB_{\lor +}}{\cB_{\top}}$ both yield full classical logic. The latter two combinations are particularly enlightening, given that according to~\cite{deccomp} the complexity of disjoint fibring is only polynomially worse than the complexity of the component logics, and we know from~\cite{Beyersdorff} 
that the decision problems for 
$\cB_{\,\coimplsub}$ or  $\cB_{\lor +}$ are both $\mathbf{co}$-$\mathbf{NP\text{-}complete}$,
as in full classical logic.
{As a matter of fact, some of the results we obtained may alternatively be established as consequences of the complexity result in~\cite{deccomp}
together with the conjecture that $\mathbf{P}\neq \mathbf{NP}$. In fact, for disjoint sets of Boolean connectives $\conw_1$ and $\conw_2$ such that $\conw_1\cup \conw_2$ is functionally complete, if the decision problems for $\cB_{\conw_1}$ and for $\cB_{\conw_2}$ are both in $\mathbf{P}$ then clearly $\cB_{\conw_1}\fibS \cB_{\conw_2}\neq \cB_{\conw_1\cup \conw_2}$.}
However, the techniques we use here do not depend on $\mathbf{P}\neq \mathbf{NP}$ and
allow us to solve also the cases 
 in which the complexity of the components is already in $\mathbf{co}$-$\mathbf{NP}$, for which the complexity result in~\cite{deccomp} offers no hints.

Similar studies could certainly be pursued concerning logics other than classical.
However, even for the classical case there are some thought-provoking unsettled questions. Concretely, we would like to devise semantical counterparts for all the combinations that do not yield fragments of classical logic, namely those covered by \Cref{prop:failingmaya}. So far, we can be sure that such semantic counterparts cannot be provided by a single finite logical matrix. Additionally, we would like to link the cases yielding fragments of classical logic (as covered by the conditions listed in Theorem~\ref{lem:colapse}) to properties of the multiple-conclusion consequence relations~\cite{SS:MCL:78} pertaining to such connectives.

\bibliographystyle{plain} 
\bibliography{biblio.bib}

\providecommand{\url}[1]{#1} \newcommand{\noopsort}[1]{}
\begin{thebibliography}{10}

\bibitem{Beyersdorff}
O.~Beyesrdorff, A.~Meier, M.~Thomas, and H.~Vollmer.
\newblock The complexity of propositional implication.
\newblock {\em Information Processing Letters}, 109:1071--1077, 2009.

\bibitem{ccal:car:jfr:css:04}
C.~{Caleiro}, W.~{Carnielli}, J.~{Rasga}, and C.~{Sernadas}.
\newblock Fibring of logics as a universal construction.
\newblock In D.~Gabbay and F.~Guenthner, editors, {\em Handbook of
  Philosophical Logic}, volume~13, pages 123--187. Kluwer, 2nd edition, 2005.

\bibitem{ccal:jabr:05}
C.~Caleiro and J.~Ramos.
\newblock From fibring to cryptofibring: a solution to the collapsing problem.
\newblock {\em Logica Universalis}, 1(1):71--92, 2007.

\bibitem{Humberstone2011-HUMTC}
L.~Humberstone.
\newblock {\em The Connectives}.
\newblock MIT Press, 2011.

\bibitem{Lau:2006:FAF:1205006}
D.~Lau.
\newblock {\em Function Algebras on Finite Sets: Basic Course on Many-Valued
  Logic and Clone Theory}.
\newblock Springer-Verlag New York, Inc., Secaucus, NJ, USA, 2006.

\bibitem{deccomp}
S.~Marcelino and C.~Caleiro.
\newblock Decidability and complexity of fibred logics without shared
  connectives.
\newblock {\em Logic Journal of the IGPL}, 24(5):673--707, September 2016.

\bibitem{apconsfinval}
S.~Marcelino and C.~Caleiro.
\newblock On the characterization of fibred logics, with applications to
  conservativity and finite-valuedness.
\newblock {\em {J}ournal of {L}ogic and {C}omputation}, 2016.
\newblock https://doi.org/10.1093/logcom/exw023.

\bibitem{Post41}
E.~L. Post.
\newblock {\em {On the Two-Valued Iterative Systems of Mathematical Logic}}.
\newblock Princeton University Press, 1941.

\bibitem{Rautenberg1981}
W.~Rautenberg.
\newblock 2-element matrices.
\newblock {\em Studia Logica}, 40(4):315--353, 1981.

\bibitem{REITH20031}
S.~Reith and H.~Vollmer.
\newblock Optimal satisfiability for propositional calculi and constraint
  satisfaction problems.
\newblock {\em Information and Computation}, 186(1):1 -- 19, 2003.

\bibitem{SS:MCL:78}
D.~{Shoesmith} and T.~{Smiley}.
\newblock {\em Multiple-{C}onclusion {L}ogic}.
\newblock Cambridge University Press, {C}ambridge, {M}assachusetts, 1978.

\bibitem{Sistla:1985:CPL:3828.3837}
A.~P. Sistla and E.~M. Clarke.
\newblock The complexity of propositional linear temporal logics.
\newblock {\em J. ACM}, 32(3):733--749, July 1985.

\bibitem{Wojcicki88}
R.~{W\'ojcicki}.
\newblock {\em Theory of Logical Calculi}.
\newblock Kluwer, Dordrecht, 1988.

\end{thebibliography}
\end{document}